\newcommand {\Sbar}[1]{{\bar{S}^{#1}}}
\algnewcommand\algorithmicforeach{\textbf{for each}}
\def \vhat{\widehat{v}}
\def \E{\mathrm{E}}
\def \V{\mathrm{V}}
\def \Cov{\mathrm{Cov}}
\def \cX{\mathcal{X}}
\def \Pr{\mathrm{Pr}}
\def \ba{\mathbf{a}}
\def \bx{\mathbf{x}}
\def \bz{\mathbf{z}}
\def \bhx{\widehat{\bx}}
\def \blambda{\boldsymbol{\lambda}}
\def \bmu {\boldsymbol{\mu}}
\def \S{S_\alpha(\delta)}
\def \bP{\mathbf{P}}
\def \sk {\mathsf{k}}
\def \sY {\mathsf{Y}}
\def \cZ {\mathcal{Z}}
\def \sL {\mathsf{L}}
\def \cI {\mathcal{I}}
\def \bV {\mathbf{V}}
\def \Ybar {\bar{Y}}
\def \bvartheta {\boldsymbol{\vartheta}}
\newtheorem{theorem}{Theorem}
\newtheorem{proposition}{Proposition}
\newtheorem{assumption}{Assumption}
\newtheorem{lemma}{Lemma}
\newcolumntype{Y}{>{\centering\arraybackslash}X}
\begin{document}

\title{Sequential Bayesian Risk Set Inference for \\
Robust Discrete Optimization via Simulation}

\author{Eunhye Song\\ The Pennsylvania State University}

\date{September 2019}

\maketitle

\section*{Abstract}
Optimization via simulation (OvS) procedures that assume the simulation inputs are generated from the real-world distributions are subject to the risk of selecting a suboptimal solution when the distributions are substituted with input models  estimated from finite real-world data---known as \emph{input model risk}.  Focusing on discrete OvS, this paper proposes a new Bayesian framework for analyzing input model risk of implementing an arbitrary solution, $\bhx$, where uncertainty about the input models is captured by a posterior distribution. 
We define the $\alpha$-level risk set of solution $\bhx$ as the set of solutions whose expected performance is better than $\bhx$ by a practically meaningful margin $(>\delta)$ given common input models  with significant probability ($>\alpha$) under the posterior distribution. The user-specified parameters, $\delta$ and $\alpha$, control robustness of the procedure to the desired level as well as guards against unnecessary conservatism. 
An empty risk set implies that there is no practically better solution than $\bhx$ with significant probability even though the real-world input distributions are unknown---a powerful statistical guarantee. For efficient estimation of the risk set, the conditional mean performance of a solution given a set of input distributions is modeled as a Gaussian process (GP) that takes the solution-distributions pair as an input. In particular, our GP model allows both parametric and nonparametric input models.
We propose the \emph{sequential risk set inference procedure} that estimates the risk set  and selects the next solution-distributions pair to  simulate using the posterior GP at each iteration. We show that simulating the pair expected to change the risk set estimate the most in the next iteration is the asymptotic one-step optimal sampling rule that minimizes the number of incorrectly classified solutions, if the procedure runs without stopping.

\section{Introduction}
\label{sec:intro}

Stochastic simulation requires specification of probabilistic input models to generate random variates from. If the simulation model is built to analyze a real-world system, the input models are often estimated based on observations from the system. These input models are subject to finite-sample estimation error, which is then propagated to the simulation output causing so-called \emph{input uncertainty}. Quantifying input uncertainty for a single simulated system has been actively studied in the literature (see \cite{song2014}, \cite{lam2016} for recent reviews). 
This paper concerns a discrete optimization via simulation (DOvS) problem in the presence of input uncertainty. In particular, we consider a problem with a finite number of feasible solutions in $\mathbb{R}^d$ that share the same input models estimated from the common  input data, where the objective function is defined as the expectation of a simulation output. Furthermore, we focus on the case where the input data are independent and identically distributed (i.i.d.) observations from unknown true real-world input distributions. 
We present the following two examples that arise in applications:
\begin{itemize}
\item A vehicle company performs a small focus-group survey to learn the distribution of customer preferences for vehicle features (e.g. color, seat material, etc.) of a new model. Their goal is to decide the feature mix to release to the market that maximizes the expected profit. A demand input model is fitted to the survey data and a market simulator is used to estimate the expected profit for each candidate feature mix scenario. See~\cite{GMVCO} for details.
\item A regional emergency medical service provider considers relocating their ambulance dispatching center. Their goal is to minimize the average response time, the time between receiving an emergency call from a patient and picking up the patient. The organization has emergency call data (time, location, emergency type, etc.) collected for a year. A discrete-event simulator is used to evaluate a number of candidate locations in the region, where the time and location of emergency calls are randomly generated from an input model fitted from the data. See~\cite{wang2019} for details. We revisit this example for an illustration in Section~\ref{sec:expr}.
\end{itemize}
Common to both examples, input models are derived from finite observations and and additional data collection requires significantly longer time than simulation, nevertheless, the decisions have to be made relying on the available data.  A traditional DOvS approach is designed to select the best solution conditional on the input models being the correct real-world input distributions. A concern for implementing such a conditional optimum is the sensitivity of its relative performance against other solutions to the input models. The conditional optimum may perform poorly under the real-world distributions, if the estimated input models are significantly different from them; such risk is referred to as \emph{input model risk}~\citep{song2015}. 

DOvS under input model risk has been actively studied in recent years, where the focus is typically on optimization---either to find the optimal solution under the unknown real-world distributions or to find a solution to a modified problem that is robust to input model risk---as reviewed in detail in Section~\ref{sec:lit}. This turns out to be a difficult problem; we may not provide a desired probability guarantee of selecting the true optimum without making a strong assumption on relative performances of the solutions or choose a robust optimum that is too conservative under the real-world distributions~\citep{song2019}. Moreover, the former approach often only guarantees the selected solution to be the true optimum asymptotically as the real-world sample size increases to infinity; there is no finite sample guarantee that it is any better than a conditional optimum. 
In many applications, however, practitioners are often okay with selecting a suboptimal solution as long as its optimality gap is within $\delta$, where $\delta$ is the largest optimality gap that can be practically ignored. This is referred to as good selection and has been studied in the ranking and selection (R\&S) literature when the correct distributions are known~\citep{eckman2018}, but not when there is input uncertainty. 

This paper views the DOvS problem from a different angle by focusing on \emph{inference} on the relative performances of the solutions under input uncertainty. In particular, we are interested in the setting when the analyst has chosen a solution, $\widehat{\bx}$, based on some criterion.
\textit{Our goal is to provide a Bayesian framework to measure input model risk of implementing $\widehat{\bx}$ 
that accounts for the largest acceptable optimality gap, $\delta$, given uncertainty about the real-world input distributions.} Adopting Bayesian input modeling, a prior on each real-world input distribution is assumed  and updated to the posterior distribution conditional on the observed data.  
As a measure of risk, we propose the concept of the \emph{risk set}, which contains the solutions that perform practically better ($>\delta$) than  $\widehat{\bx}$ under common input models with significant probability ($>\alpha$). Here, the performance of each solution is measured by the conditional mean of its simulation output given input models sampled from their posterior. The comparison is made conditional on the common input models  and the significance level, $\alpha$, is with respect to the posterior on the input models.


To illustrate implications of the risk set, suppose in the second example above, a parametric input model is assumed for the arrival process of the emergency calls. A prior distribution is assumed for the parameter and its posterior is derived conditional on the data. The analyst found the conditional optimal location ($\widehat{\bx}$) using the maximum a posteriori (MAP) estimate of the posterior to specify the arrival process and is happy to adopt this location if its expected response time is no worse than any other location by $\delta=5$ minutes under the real-world arrival process. Suppose the risk set is constructed at $\alpha=0.1$. Then, each location in the set has the expected response time shorter than $\bhx$ by greater than $5$ minutes under more than $10\%$ of the arrival processes generated by the posterior distribution. If the risk set is empty, it implies that any other candidate location is practically no better than $\bhx$ with $90\%$ confidence: a strong statistical guarantee 
that $\bhx$ is a low-risk solution even in the presence of input uncertainty. 

A strength of the risk set is that it can be used to test robustness of any feasible solution to input model risk, which makes it very practical. For instance, the analyst may prefer a solution for some business reasons that could not be factored into the simulation model and the risk set inference can still provide a valuable insight. 
Also, the risk set guards against unnecessary conservatism by incorporating $\delta$. In the example above, if $\delta=0$, then the risk set includes any solution that is slightly better than $\bhx$ with probability $>\alpha$; if the set size is large, we may conclude that $\bhx$ is high risk. Thus, the same $\bhx$ may be deemed low risk or high risk depending on $\delta$. The user-specified significance level, $\alpha$, also affects the risk set---the smaller $\alpha$ is, the more sensitive the user is to input model risk.  
We show that  the risk set is non-increasing in $\alpha$ as well as $\delta$ in Section~3. 


Estimating the risk set is computationally challenging as it has a nested simulation structure; we first need to estimate the conditional means of all solutions  given a realization from the posterior of input models and then analyze the distribution of the conditional means given the posterior. A naive Monte Carlo (MC) simulation approach is to sample multiple sets of input distributions from the posterior, run an equal number of replications at each solution-distributions pair, and use the sample conditional means to estimate the risk set. However, this is inefficient as some solutions may be far worse than $\bhx$ with a large probability so that it may be  correctly excluded from the risk set with smaller simulation effort, whereas others may require more to be correctly classified.

To allocate simulation effort more efficiently, we design a Bayesian sequential sampling procedure that selects a solution-distributions pair to simulate at each iteration based on a sampling criterion. Specifically, we assume a Gaussian process (GP) prior on the conditional mean surface that takes feasible solution $\bx$ and a collection of input distributions sampled from the posterior as inputs. At each iteration, the GP is updated to its posterior conditional on the cumulative simulation results. Our risk set estimator replaces the sample conditional means in the abovementioned naive MC approach with the GP. 

We propose a sequential sampling rule  to reduce the estimation error of the risk set by \emph{minimizing the expected number of incorrectly classified solutions (to be or not to be in the risk set) in the next iteration}.  
Although the exact loss function depends on the true risk set, we derive a lower bound that only depends on the current estimate of the risk set and show it becomes asymptotically tight as the procedure runs without stopping. We also provide an approximation to the expected lower bound that can be computed cheaply up to a numerical precision without having to estimate.

Our sampling criterion resembles value-of-information-based sampling rules in Bayesian ranking and selection (R\&S) literature \citep{chickinoue2001,frazierKG}. However, our goal is not to select a single solution, but to identify a set of solutions. 
A more closely related literature is on sequential Bayesian experiment design for set estimation as reviewed in Section~\ref{sec:lit}.

Another contribution of this paper is a GP metamodeling framework that can take both parametric and nonparametric input distributions as inputs when the latter is defined by a probability simplex on a finite support (e.g. empirical distribution function). A GP metamodel is a popular tool for developing a DOvS algorithms under input uncertainty \citep{pearce2017,lakshmanan2017} as well as input uncertainty quantification of a single simulated system \citep{xie2014}. All these work consider parametric input models only, which can be easily extended to include nonparametric models using our framework. 

The remainder of the paper is organized as follows. We review related literature in Section~\ref{sec:lit} and mathematically define the risk set in Section~\ref{sec:riskset}, elaborate on Bayesian input modeling and introduce our choice of GP prior in Section~\ref{sec:model}, and discuss the risk set estimator and the sequential sampling criterion in Section~\ref{sec:inference}. The sequential risk set inference procedure and its asymptotic performance analysis are presented in Section~\ref{sec:procedure} followed by its empirical demonstration in Section~\ref{sec:expr}. 

\section{Literature Review}
\label{sec:lit}

Approaches in the literature to solve DOvS problems under input model risk can be divided into several categories according to their problem formulations.  The first is to find the optimal solution, $\bx^c$, under the unknown correct real-world distributions ($c$ for correct). \cite{corlu2013} propose a subset selection procedure that accounts for both simulation error and input uncertainty, where the subset includes $\bx^c$ with probability $\geq 1-\alpha$ under the assumption that the performance of $\bx^c$ given the real-world distributions is at least $\delta$ better than the other solutions' (preference-zone assumption). Under the same assumption, \cite{song2015} incorporate input uncertainty into a R\&S procedure such that probability of correct selection of $\bx^c$ is $\geq 1-\alpha$.   
However, because the real-world distributions are unknown, in both approaches, there is a strictly positive lower bound on $\delta$, which increases when input uncertainty increases. If the user chooses $\delta$ smaller than the lower bound, than the procedures cannot provide the desired probability guarantee.  

\cite{song2019} develop a multiple comparisons with the best procedure free of the PZ assumption, which returns a set that includes $\bx^c$ with high probability. Similar to our setting, they focus on the case all solutions share the same input models estimated from common data and explicitly model the joint effect of input uncertainty to the simulation outputs of the solutions, which they refer to as  common-input-data (CID) effects.  However, their procedure tends to return a large set if there are many feasible solutions, because it  1) does not account for $\delta$ and 2) is designed to meet the target family-wise error rate. 
Our approach does not suffer from these issues because 1) we can exclude solutions whose performances are practically indistinguishable from $\bhx$ using $\delta$ and 2) each solution's classification (whether or not to include in the risk set) only depends on the comparison between the solution and $\bhx$.

Another popular formulation is to find solution ${\bx}(\mathcal{U})$ robust to input model risk given uncertainty set $\mathcal{U}$ of candidate input distributions  postulated from real-world data. Specifically, this approach first finds for each $\bx$ the input distribution that produces the worst-case performance among $\mathcal{U}$, then selects the solution with the best worst-case performance as ${\bx}(\mathcal{U})$.
The same formulation has been studied extensively in the distributionally robust optimization literature (see \cite{delage2010,bental2013} for example). 
In the DOvS context, \cite{fan2013} propose R\&S procedures that guarantee probability of correctly selecting $\bx(\mathcal{U})$ when $\mathcal{U}$ consists of a finite number of candidate distributions. Under the same setting, \cite{gao2017} construct an optimal computing budget allocation (OCBA) scheme to maximize the asymptotic probability of correctly selecting $\bx(\mathcal{U})$ as the simulation effort increases. \cite{lakshmanan2017} use a GP metamodel to estimate the parameters for the OCBA scheme to find $\bx(\mathcal{U})$, where $\mathcal{U}$ is constructed by bootstrapping the real-world data.
A benefit of this formulation is that once $\mathcal{U}$ is decided, finding $\bx(\mathcal{U})$ is only subject to simulation error (not input uncertainty) making it easier to provide a statistical guarantee than finding $\bx^c$. On the other hand, $\bx(\mathcal{U})$ tends to be conservative as it considers the worst-case input distributions for each solution instead of evaluating all solutions under common input distributions. 
Consequently, $\bx(\mathcal{U})$ may be significantly suboptimal under the real-world input distributions.   

Some procedures  aim to find the solution with the best average simulation output, $\bar{\bx}$, where the average is taken with respect to both input uncertainty and simulation error. This objective is said to be risk-neutral to input model risk. \cite{corlu2015} design a subset selection procedure that returns a set containing  $\bar{\bx}$ with probability $\geq 1-\alpha.$
\cite{pearce2017} and \cite{wangng2018} propose GP-based Bayesian optimization (BO) procedures that find $\bar{\bx}$. BO is a popular inference-based optimization procedure when the objective is to find the global optimum of a expensive-to-evaluate function (see \cite{frazierBO} for a recent survey on BO) and can be naturally extended in this context. 
However, such a risk-neutral objective fails to provide any warning to the user when there is large input uncertainty. \cite{wu2018} propose using risk measures instead of the average such as a value-at-risk (VaR) or conditional value-at-risk. 


In terms of methodology, our procedure is closely related to Bayesian sequential design of experiments for set estimation, which has been studied in several different contexts. \cite{picheny2010} consider estimating a level set of a function $g$ of $\mathbf{X}$, e.g., $\{\mathbf{X}:g(\mathbf{X})=\gamma\}$, by modeling $g(\mathbf{X})$ as a GP. 
Their sampling rule targets minimizing the integrated mean squared error (IMSE) in a small neighborhood of the level set, $\{\mathbf{X}: g(\mathbf{X})\in[\gamma\pm\delta]\}$.
Applied to reliability engineering, \cite{Bect2012} discuss estimating the probability of the failure event, $\{\mathbf{X}:g(\mathbf{X})\geq \gamma\}$, given random vector $\mathbf{X}$, which can be converted to estimating a superlevel set of support points of $\mathbf{X}$. 
They also model $g(\mathbf{X})$ as a GP and derive sampling rules to minimize the expected value of the quadratic loss of the estimated failure probability. 
The same problem is considered by \cite{zanette2019}, however, their goal is to provide a robust estimate of the superlevel set by returning the largest set that contains the true set with a target probability.
In the context of DOvS, \cite{xiefrazier2013} study multiple comparisons with known standard value $\gamma$, where the goal is to identify a set of solutions whose expected simulation outputs are greater than $\gamma$. Taking the Bayesian view, they explicitly derive the optimal sampling policies for some special cases. 
What distinguishes our problem from those considered in these work is its nested structure. Our problem can be seen as first finding a superlevel set of input distributions for each solution, then finding a superlevel set of the solutions. Thus, a new sampling criterion is needed to solve it sequentially.


\section{Risk set formulation}
\label{sec:riskset}
In this section, we formally define the risk set. We denote the finite feasible solution space in $\mathbb{R}^d$ by $\cX$. Let  $\bP^c$ represent the true real-world joint input distributions from which we collect observations, $\bz$. Taking a Bayesian approach,  we model $\bP^c$ with $\bP$ by choosing the prior on $\bP$ and updating its posterior $\pi(\bP|\bz)$ conditional on $\bz$. We remain abstract about input modeling in this section and defer its detailed discussion to Section~\ref{subsec:inputmodeling}. 
The simulation output of solution $\bx\in\mathcal{X}$ when the inputs are generated from $\bP$ can be written as 
$Y(\bx;\bP)= \E[Y(\bx;\bP)|\bP] + \varepsilon(\bx;\bP),$
where $\varepsilon(\bx;\bP)$ is the simulation error with mean $0$ and variance $v(\bx,\bP)<\infty$ conditional on $\bP$. We refer to $\E[Y(\bx;\bP)|\bP]$ as the \textit{conditional mean surface}, which maps $(\bx,\bP)$ to $\mathbb{R}$. 

Suppose $\bhx\in\cX$ is the solution considered for implementation. 
Given $\bhx$, we define the $\alpha$-level risk set given the user-chosen error rate $0<\alpha<1$ and the largest optimality gap $\delta\geq 0$ that can be practically ignored as
\begin{equation}
\label{eq:risk.set}
S_\alpha(\delta) \equiv \left\{\bx\in\cX\left|\Pr\{ \E[Y(\bhx;\bP)|\bP]-\E[Y(\bx;\bP)|\bP]>\delta|\bz\}> \alpha \right.\right\},
\end{equation}
where the probability is taken with respect to $\pi(\bP|\bz)$.  
If $S_\alpha(\delta)$ is empty, then 
$
\Pr\{ \E[Y(\bhx;\bP)|\bP]-\E[Y(\bx^c;\bP)|\bP]>\delta|\bz\}\leq \alpha
$
implying that the probability $\bx^c$ is practically significantly better than $\bhx$ is within the acceptable error rate, $\alpha$. 
Note it is possible for $S_\alpha(\delta)$ not to include $\bx^c$ even if $\E[Y(\bhx;\bP)|\bP]<\E[Y(\bx^c;\bP)|\bP]$ with probability $1$. 
In fact, identifying $\bx^c$ is not our goal, nor we make inference on the expected simulation outputs of the solutions under $\bP^c$; \emph{the goal is to find the solutions that pose risk of outperforming $\bhx$ by a significant margin given uncertainty about $\bP$ described by its posterior distribution}. 
Below, we illustrate an example of a risk set using a simple DOvS problem. 

Consider an M/M/$1$/$k$ queueing system. The objective of this problem is to decide the system capacity, i.e., $\bx = k$, that minimizes the expected cost of operation: 
\begin{equation}
\label{eq:mm1k.cost}
\E[\mbox{net cost per customer}] = c\E[\mbox{waiting time}] - r(1-\Pr\{\mbox{balking}\}),
\end{equation}
where $c$ is the cost per unit waiting time per customer and $r$ is the revenue per served customer. We assume $\cX=\{\bx|1 \leq \bx \leq 50\}$. 
In this problem, $\bP$ is the joint distribution of interarrival and service times. Suppose real-world interarrival and service times are exponentially distributed with means $1$ and $1.1$, respectively, and we have $100$ i.i.d.\ observations from each. From the steady-state analysis of the $M/M/1/k$ queue, one can obtain the analytical expression for~(\ref{eq:mm1k.cost}) (see Appendix~A).  Figure~\ref{fig:cost_surface} plots the expected cost~(\ref{eq:mm1k.cost}) as a function of $\bx$ and the mean service time when the arrival rate $= 1$, $c=1$, and $r=200$. The optimal solutions at selected mean service times are marked with white solid circles, which shows their clear dependence.
Given the true mean service time, $1.1$, the correct optimal solution is $\bx^c = 14$, however, Figure~\ref{fig:cost_surface} shows that if we simply assume the mean service time estimated from the data as the truth, we may choose a different solution as the best. Nevertheless, we argue that the solution is still acceptable, if its risk set is empty given user-specified $\alpha$ and $\delta$. 

\begin{figure}
\centering
\begin{minipage}{0.45\textwidth}
\centering
\includegraphics[scale=0.53]{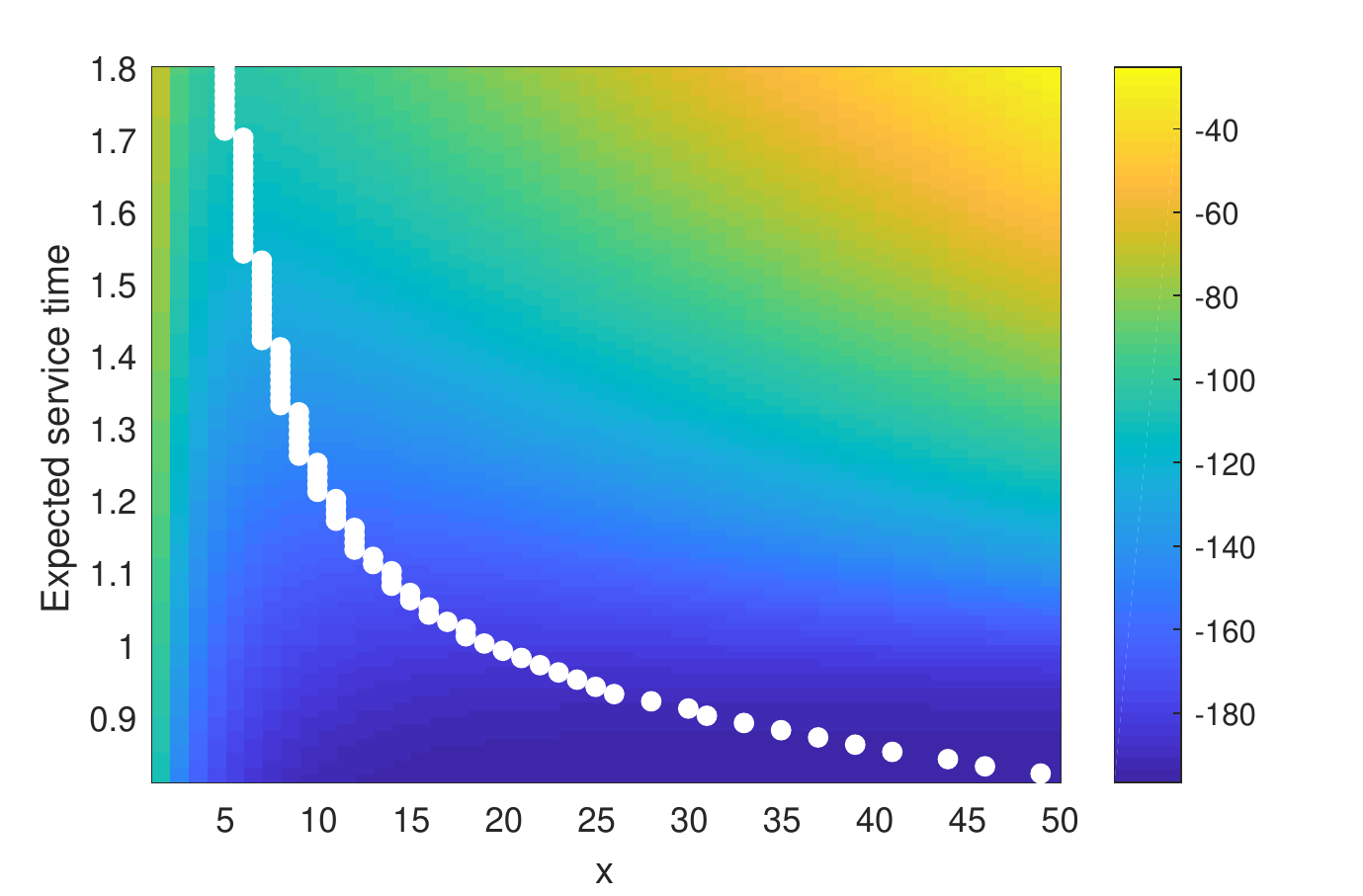}
\caption{A plot of the expected cost of the M/M/$1$/$k$ example. Cost-minimizing $\bx = k$ for each expected service time is marked with a white circle.}
\label{fig:cost_surface}
\end{minipage} ~
\begin{minipage}{0.45\textwidth}
\centering
\captionsetup{type=table}
\caption{Risk sets for different values of $\alpha$ when $\delta = 1$ and $\bhx = 17$ given a sample of $100$ i.i.d.\ observations from Exp($1$) and Exp($1.1$).  }
\begin{tabular}{l|l}
$\alpha$ & \multicolumn{1}{c}{$S_\alpha(\delta)$} \\ \hline
$0.05$ & $\{5, 6, 7, 8, 9, 10, 11, 12, 13, 14, 15, 16\}$\\ \hline
$0.1$ & $\{7, 8, 9, 10, 11, 12, 13, 14, 15\}$\\\hline
$0.15$ & $\{8, 9, 10, 11, 12, 13, 14, 15\}$\\\hline
$0.2$ & $\{10, 11, 12, 13, 14, 15\}$\\\hline
$0.25$ & $\{12, 13\}$\\\hline
$0.3$ & $\emptyset$\\
\end{tabular}
\label{tab:riskset.demo}
\end{minipage}
\end{figure}

For demonstration, we sampled $100$ observations from each real-world input distribution,  modeled $\bP$ assuming the nonparametric prior described in Section~\ref{subsec:inputmodeling} on $\bP$ and computed its posterior distribution conditional on the observations. Given the MAP of the posterior, $\bx = 17$ minimizes the expected cost.  
Table~\ref{tab:riskset.demo} shows the estimated risk sets for different values of $\alpha$ when $\delta = 1$ and $\bhx = 17$, which are obtained by sampling $1{,}000$ $\bP$s from its posterior distribution and estimating the probability in~(\ref{eq:risk.set}) for each $\bx$.
The majority of the solutions are excluded from the risk sets for all $\alpha$ values in Table~\ref{tab:riskset.demo}. Also, none of the risk sets include $\bx = 18$, because its performance does not differ significantly from that of $\bhx$. Figure~\ref{subfig:x18} displays the histogram of $\E[Y(17;\bP)|\bP]-\E[Y(18;\bP)|\bP]$ obtained from the sampled $\bP$s. All observations are less than $\delta = 1$ explaining why $\bx=18$ is excluded from all estimated risk sets in Table~\ref{tab:riskset.demo}. 
On the other hand, $\bx = 13$ is included in $\S$ for all $\alpha$'s except for $\alpha = 0.3$ implying that it performs so differently from $\bhx$ that they outperform $\bhx$ with a significant margin ($>\delta$) under some $\bP$ (see Figure~\ref{subfig:x13}). Some other solutions such as $\bx=9$ are included in the risk sets only for smaller $\alpha$ values (Figure~\ref{subfig:x9}). Also notice that $\bx^c = 14$ is not always included in $\S$ as discussed earlier.

\begin{figure} [tb]
\begin{subfigure}{.33\textwidth}
\begin{center}
\includegraphics[scale=0.58]{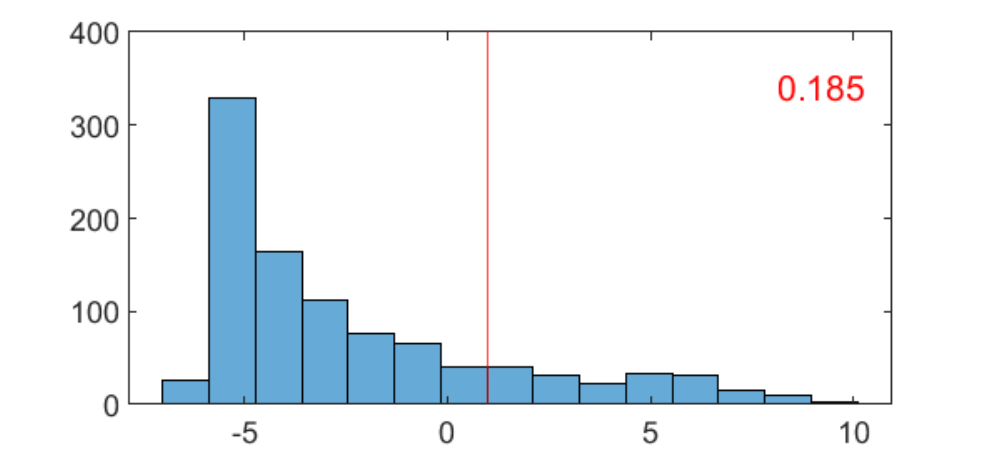}
\caption{$\bx = 9$}
\label{subfig:x9}
\end{center}
\end{subfigure}%
\begin{subfigure}{.33\textwidth}
\begin{center}
\includegraphics[scale=0.58]{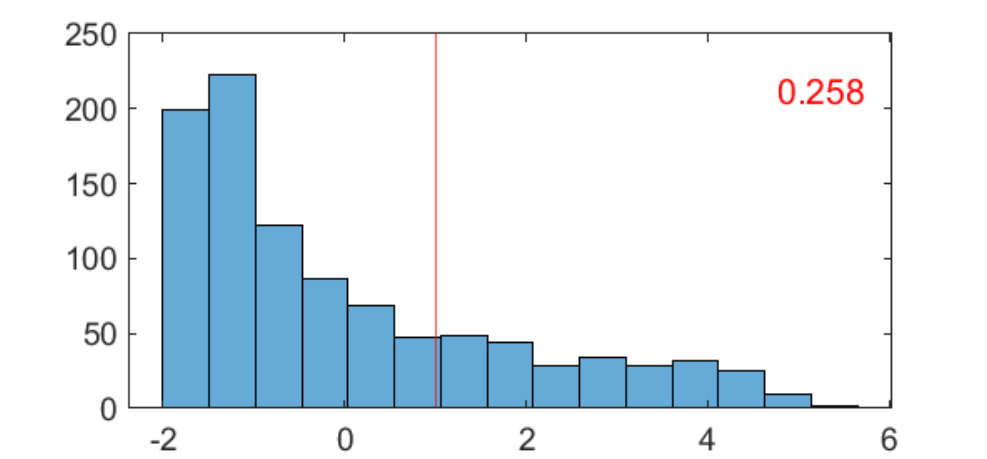}
\caption{$\bx = 13$}
\label{subfig:x13}
\end{center}
\end{subfigure}%
\begin{subfigure}{.33\textwidth}
\begin{center}
\includegraphics[scale=0.58]{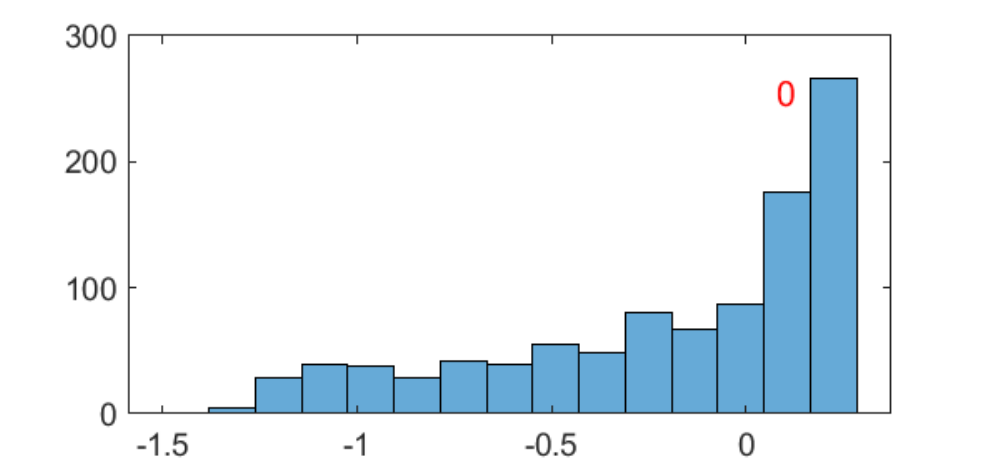}
\caption{$\bx = 18$}
\label{subfig:x18}
\end{center}
\end{subfigure}
\caption{The histograms of $\E[Y(\bhx;\bP)|\bP]-\E[Y(\bx;\bP)|\bP]$ for $\bx = 9, 13,$ and $18$ obtained by sampling $1{,}000$ $\bP$s from its posterior distribution. The red line represents $\E[Y(\bhx;\bP)|\bP]-\E[Y(\bx;\bP)|\bP] = \delta$. The estimate of $\Pr\{ \E[Y(\bhx;\bP)|\bP]-\E[Y(\bx^c;\bP)|\bP]>\delta|\bz\}$ is presented at the upper right corner of each plot.}
\end{figure}

\sloppy
Another interesting observation from Table~\ref{tab:riskset.demo} is that
 $S_{\alpha}(\delta)$ is non-increasing in $\alpha$. 
That is, $S_{\alpha_1}(\delta) \supseteq S_{\alpha_2}(\delta)$ if $\alpha_1 \leq \alpha_2$, which can be shown directly from Definition~(\ref{eq:risk.set}). Smaller $\alpha$ means that the decision maker is more cautious about input model risk and wants to detect solutions with smaller chances of being practically better than $\bhx$ resulting a larger risk set. In the example above, the risk set becomes empty for $\alpha \geq0.3$.
Moreover, $S_\alpha(\delta)$ is non-increasing in $\delta$, because for each $\bx\in\mathcal{X}$ and any $\delta1\leq\delta_2$, $\Pr\{ \E[Y(\bhx;\bP)|\bP]-\E[Y(\bx;\bP)|\bP]>\delta_1|\bz\}\geq\Pr\{ \E[Y(\bhx;\bP)|\bP]-\E[Y(\bx;\bP)|\bP]>\delta_2|\bz\}$. A decision maker with smaller $\delta$ cares about smaller optimality gap, and thus  ends up with a bigger---non-smaller to be precise---risk set.

The following expression for $S_\alpha(\delta)$ is equivalent to~(\ref{eq:risk.set}):
\begin{equation}
\label{eq:risk.set2}
S_\alpha(\delta) = \{\bx\in\cX|q_{1-\alpha}(\E[Y(\bhx;\bP)|\bP]-\E[Y(\bx;\bP)|\bP])>\delta\},
\end{equation}
where $q_{1-\alpha}(\cdot)$ is the $(1-\alpha)$-level quantile function with respect to $\pi(\bP|\bz)$.  By definition, $q_{1-\alpha}$ can be replaced by $\mathrm{VaR}_\alpha(\cdot)$, the $\alpha$-level value-at-risk. Therefore,~(\ref{eq:risk.set2}) can be interpreted as applying a robust measure controlled by $\alpha$ to decide whether $\bx$ may perform better than $\bhx$.
Reformulation~(\ref{eq:risk.set2}) shows that we can devise a class of the risk sets by replacing VaR$_\alpha(\cdot)$ with other risk measures. For instance, if conditional value at risk (CVaR$_\alpha(\cdot)$) is used instead of VaR$_\alpha(\cdot)$, then $\S$ includes $\bx$ such that whose conditional $\alpha$-tail average of ${\E[Y(\bhx;\bP)|\bP]-\E[Y(\bx;\bP)|\bP]}$ is greater than $\delta$. 
We focus on~(\ref{eq:risk.set}) leaving the possibility of exploring other risk measures in future work.
 
Similar to~\cite{song2019}, our formulation of $\S$ facilitates exploiting the CID effects when there is positive correlation between $\E[Y(\bhx;\bP)|\bP]$ and $\E[Y(\bx;\bP)|\bP]$ for $\bx\neq \bhx$. Loosely speaking, such positive correlation makes it easier to decide whether $\bx$ belongs in $\S$ or not. 
To see this, let $\mu_\Delta$ and $\sigma^2_\Delta$ be mean and variance of $\Delta\equiv \E[Y(\bhx;\bP)|\bP] - \E[Y(\bx;\bP)|\bP]$, respectively. Recall that $\bx\in\S$, if $\Pr\{\Delta> \delta\} \geq \alpha$. 
When $\mu_\Delta < \delta$, 
\begin{equation}
\Pr\{\Delta> \delta\}=\Pr\{\Delta-\mu_\Delta>\delta-\mu_\Delta\} \leq \Pr\{|\Delta-\mu_\Delta|\geq \delta-\mu_\Delta\} \leq \sigma_\Delta^2/(\delta-\mu_\Delta)^2, \label{eq:CIDeffects.prob}
\end{equation}
where the last inequality follows from Chebyshev's inequality. The smaller $\sigma^2_\Delta$ is, the smaller the upper bound of~(\ref{eq:CIDeffects.prob}) is. 
Positive correlation between $\E[Y(\bx;\bP)|\bP]$ and $\E[Y(\bhx;\bP)|\bP]$ reduces $\sigma^2_\Delta$ and if the upper bound of~(\ref{eq:CIDeffects.prob}) is smaller than $\alpha$, $\bx$ is excluded from the $\alpha$-level risk set.  
When $\mu_\Delta > \delta$, 
\begin{equation}
\label{eq:CIDeffects.prob2}
\Pr\{\Delta> \delta\} = 1-\Pr\{\Delta-\mu_\Delta \leq \delta - \mu_\Delta\}  \geq 1-\Pr\{|\Delta-\mu_\Delta|\geq \mu_\Delta-\delta\} \geq
1-\sigma_\Delta^2/(\mu_\Delta-\delta)^2.
\end{equation}
Similarly, the last inequality of~(\ref{eq:CIDeffects.prob2}) follows from Chebyshev's inequaltiy. In this case, the smaller $\sigma^2_\Delta$ is, the larger the lower bound of~(\ref{eq:CIDeffects.prob2}) is. 
Thus, when $\E[Y(\bx;\bP)|\bP]$ and $\E[Y(\bhx;\bP)|\bP]$ are positively correlated, we have larger lower bound for~(\ref{eq:CIDeffects.prob2}) and if the bound is larger than $\alpha$, then $\bx$ is included in the $\alpha$-level risk set.

\vspace{11pt}
\noindent
\textbf{Remark:} As mentioned in Section~\ref{sec:lit}, the risk set can be interpretted as a (strict) superlevel set that includes the solutions satisfying the inequality constraint ($>\alpha$) on the probability in~(\ref{eq:risk.set}). Moreover, for each $\bx$, estimating the probability in~(\ref{eq:risk.set}) can be interpreted as estimating a level set of $\bP$ that satisfies the condition, $\E[Y(\bhx;\bP)|\bP]-\E[Y(\bx;\bP)|\bP]>\delta$. 

\section{Bayesian modeling}
\label{sec:model}
In this section, we discuss Bayesian modeling of $\bP$ as well as the choice of a prior to model the conditional mean surface as a GP that takes $(\bx,\bP)$ as inputs. Estimating the risk set has two major challenges. First, unlike the simple $M/M/1/k$ example above, for a general DOvS problem the analytical expression for $\E[Y(\bx;\bP)|\bP]$ is unknown and must be estimated via simulation. Secondly, from the posterior distribution of $\bP$ we need to estimate the posterior distribution of $\E[Y(\bhx;\bP)|\bP]-\E[Y(\bx;\bP)|\bP]$ for each $\bx\neq\bhx$.  We discuss how the GP model facilitates efficient risk set estimation by providing posterior inference on the conditional mean surface.

\subsection{Bayesian input modeling}
\label{subsec:inputmodeling}
Suppose there are $L$ real-world input distributions from which we collect $m_1,m_2,\ldots,m_L$ observations. Let $z_1^\ell, z_2^\ell, \ldots, z_{m_\ell}^\ell \in \mathcal{Z}^\ell$ be observations from the $\ell$th input process whose unknown true probability measure is $P_\ell^{c}$, where $\mathcal{Z}^\ell$ may be a subset of $\mathbb{R}^d$ or even a more general space. \textit{We allow the inputs to be dependent when they are observed together}, e.g.,\ the first input may be a three-dimensional random vector whose elements are correlated. The joint true probability measure is denoted by $\bP^c$, which is simply a product of $P_1^{c}, P_2^{c},\ldots,P_L^{c}$ under the independence assumption, and defined on $\cZ \equiv \cZ^1\times\cZ^2\times\cdots\times\cZ^L$. 

We model unknown $P_\ell^{c}$ with $P_\ell$ by imposing a prior distribution on $P_\ell$. 
If we take a parametric Bayesian view, then this simply comes down to assuming a prior distribution on the parameter vector, $\theta_\ell$, of $P_\ell$. When the distribution family of $P_\ell$ allows a conjugate prior on $\theta_\ell$, then the posterior distribution of $\theta_\ell$ can be obtained analytically. 
Although this is a popular approach in the input uncertainty literature \citep{ng2006,xie2016}, it restricts input modeling flexibility. When a conjugate relationship is not available, one can still sample from the posterior distribution via Markov Chain Monte Carlo, however, it may be computationally expensive.

An alternative approach is to model $P_\ell$ nonparametically by representing $P_\ell$ as a probability simplex on $z_1^\ell, z_2^\ell, \ldots, z_{m_\ell}^\ell$ and captures uncertainty about $P_\ell$ by putting a prior distribution on the probability simplex.
This has three major benefits over the parametric approach: 1) no assumption on the distribution family of $P_\ell$ needs to be made; 2) the posterior update for $P_\ell$ is straightforward; 3)  sampling from the posterior of $P_\ell$ is very cheap. In the following, we introduce the nonparametric Bayesian input modeling in detail.

Let $u_\ell$ denote the number of distinct observations among $z_1^\ell, z_2^\ell, \ldots, z_{m_\ell}^\ell$ and $v_1^\ell, v_2^\ell,\ldots,v_{u_\ell}^\ell$ denote those observations. We further define $\mathbf{w}^\ell=\{w_1^\ell,w_2^\ell,\ldots,w_{u_\ell}^\ell\}$ as the probability simplex assigned to $\{v_1^\ell, v_2^\ell,\ldots,v_{u_\ell}^\ell\}$.
We assume a Dirichlet prior on $\mathbf{w}^\ell$ whose density function is proportional to
\begin{equation}
\label{eq:prior}
I\left\{\sum_{j=1}^{u_\ell}w_j^\ell=1\right\}\Pi_{j=1}^{u_\ell} (w_j^\ell)^{\kappa_j^\ell-1}, 
\end{equation}
where $I(\cdot)$ is an indicator function and $\kappa_j^\ell>0, 1\leq j \leq u_\ell$, are the concentration parameters of the Dirichlet distribution.  
Assuming $P_\ell$ is defined by the probability simplex $\mathbf{w}^\ell$, the posterior distribution of $\mathbf{w}^\ell$ is again Dirichlet with its distribution function proportional to 
\begin{equation}
\label{eq:dirichlet}
I\left\{\sum_{j=1}^{u_\ell}w_j^\ell=1\right\}\Pi_{j=1}^{u_\ell} (w_j^\ell)^{c^\ell_j+\kappa^\ell_j-1},
\end{equation}
where $c^\ell_j$ represents the number of observations equal to $v_j^\ell$. If all $L$ input distributions are modeled nonparametrically, the posterior distribution function of $\bP$ given observed data $\bz = \{z_j^\ell,j=1,2,\ldots,m_\ell, \ell=1,2,\ldots,L\}$, $\pi(\bP|\bz)$, is proportional to 
\begin{equation}
\label{eq:posterior.P}
\Pi_{\ell=1}^L \left(I\left\{\sum_{j=1}^{u_\ell}w_j^\ell=1\right\}\Pi_{j=1}^{u_\ell} (w_j^\ell)^{c^\ell_j+\kappa^\ell_j-1}\right).
\end{equation}
When mixed---parametric and nonparametric---input models are used, we can still write a similar product form for~(\ref{eq:posterior.P}) under the independence assumption. 

\cite{rubin1981} names the abovementioned method \emph{Bayesian bootstrap} as the resulting input model resamples the data points with the weights given by the probability simplex. Bayesian bootstrap is more flexible than frequentist's nonparametric bootstrap approach as it allows $w_j$ to be continuous-valued rather than a multiple of $1/m_\ell$ as in  frequentist's bootstrap, where $\kappa_j^\ell$'s control how concentrated the distribution of the probability simplex is. 
In particular, if $\kappa_j^\ell=1$ for all $j$ and $\ell$ for the prior distribution, then it implies that all probability simplices defined on $\{v_1^\ell, v_2^\ell,\ldots,v_{u_\ell}^\ell\}$ are equally likely (uniform Dirichlet). 
With this choice, the concentration parameters of the posterior distribution~(\ref{eq:dirichlet}) become $c_j^\ell +1$, for $j=1,2,\ldots, u_\ell,\ell=1,2,\ldots,L,$ and the MAP of the posterior distribution is $w_j^\ell = c_j^\ell/m_\ell$ for all $j$, which is equivalent to the frequentist's empirical distribution function of $z_1^\ell, z_2^\ell, \ldots, z_{m_\ell}^\ell$~\citep{owen2001}.  The larger $c_j^\ell$'s are, the more concentrated the distribution of the probability simplex is around its MAP. 

A probability simplex on $\bz^\ell$ that follows~(\ref{eq:dirichlet}) can be generated from $m^\ell + \sum_{j=1}^{u_\ell} \kappa_j^\ell-1$ independent uniform $(0,1)$ random variables. See \cite{rubin1981} for further details.

\subsection{Gaussian process model for the conditional mean surface}
\label{subsec:prior}

A GP model has been extremely popular for representing an expensive-to-evaluate objective function in BO~\citep{frazierBO}. In their context, the goal is to find $\arg\min_{\bx\in\cX}g(\bx)$, where the analytical expression of $g$ is unknown, but $g$ can be evaluated at $\bx$ either exactly or with an additional stochastic error. 
In the context of OvS under input uncertainty, \cite{pearce2017},~\cite{lakshmanan2017}, and \cite{wangng2018} adopt a GP to model $\E[Y(\bx;\bP)|\bP]$ when $\bP$ is characterized by its parameter vector, $\boldsymbol{\theta} = \{\theta_1,\theta_2,\ldots,\theta_L\}$. In this case, both $\bx$ and $\boldsymbol{\theta}$ are defined in $\mathbb{R}^d$ letting them use the same GP model typically used in BO. 
However, when some input distributions are modeled nonparametrically as described in Section~\ref{subsec:inputmodeling}, we need a GP that takes $\bx$ and a probability simplex as inputs. 
Below, we present our choice of GP prior that facilitates using the nonparametric input models and discuss its posterior update. 

For simplicity of presentation, we assume all $L$ input distributions are modeled nonparametrically with the Dirichlet prior/posterior in Section~\ref{subsec:inputmodeling}. However, our GP model can be extended easily to the case of mixed input models; see remarks below.
We model $\E[Y(\bx;\bP)|\bP]$ as a realization of GP $\eta(\bx,\bP)$ with mean function $\mu(\bx,\bP)$ and covariance kernel $\mathsf{k}(\bx,\bP;\bx^\prime,\bP^\prime)$:
\begin{equation}
\label{eq:GP}
\eta(\bx,\bP) \sim \mathcal{GP}(\mu(\bx,\bP),\sk(\bx,\bP;\bx^\prime,\bP^\prime)) \mbox{ for } \bx,\bx^\prime \in\cX, \bP, \bP^\prime \in \mathcal{M}_+^1(\mathcal{Z}),
\end{equation}
where $\mathcal{M}_+^1(\mathcal{Z})$ represent a set of probability simplices on $\mathcal{Z}$.  
The covariance between $\eta(\bx,\bP)$ and $\eta(\bx^\prime, \bP^\prime)$ is defined by covariance kernel $\sk$, i.e., $\Cov(\eta(\bx,\bP),\eta(\bx^\prime, \bP^\prime))= \sk(\bx,\bP;\bx^\prime,\bP^\prime)$.  
The covariance kernel is a driving force of GP prediction as it enables statistical inference on $\E[Y(\bx;\bP)|\bP]$ at $(\bx,\bP)$ that has not been simulated yet by inducing spatial correlation between $(\bx,\bP)$ and other solution-distributions pairs that have been simulated. 
Any finite number of observations from the GP have a joint Gaussian distribution whose mean and variance-covariance matrix are defined by $\mu$ and the Gram matrix constructed from $\sk$~\citep{rasmussen2005}. 
We choose $\mu(\bx,\bP) = \beta_0\in \mathbb{R}$ (i.e., constant prior mean) and   covariance kernel
$\sk(\bx,\bP;\bx^\prime,\bP^\prime) = \tau^2 \gamma_\cX(\bx,\bx^\prime) \gamma_\mathcal{M}(\bP,\bP^\prime)$,
where $\tau^2\in\mathbb{R}$ is the marginal variance of the GP, and $\gamma_\cX$ and $\gamma_\mathcal{M}$ are correlation kernels defined on $\cX\times\cX$ and $\mathcal{M}_+^1(\mathcal{Z})\times \mathcal{M}_+^1(\mathcal{Z})$, respectively. We also require $\sk(\bx,\bP;\bx^\prime,\bP^\prime) > 0$ for all $(\bx,\bP)$ and $(\bx^\prime, \bP^\prime)$, i.e., all $\eta(\bx,\bP)$  are positively correlated under the GP prior. 
We additionally assume  $\varepsilon(\bx,\bP)\sim N(0,v(\bx,\bP))$ so that the posterior  of (\ref{eq:GP}) is also GP. 

In general, kernel $\gamma(\ba,\ba^\prime)$ defined on $\mathcal{A}\times\mathcal{A}$ is said to be positive definite (pd), if and only if,
$\gamma$ is symmetric and 
$
\sum_{1\leq i\leq n}\sum_{1\leq j\leq n} c_i c_j\gamma(\ba_i,\ba_j) \geq 0
$
for all $n\in \mathbb{N}, \ba_i\in \mathcal{A}, i=1,2,\ldots,n$ and for all $c_i \in \mathbb{R}, i=1,2,\ldots,n$.
Such $\gamma$ generates a positive semi-definite Gram matrix given any finite collection of $\ba$'s in $\mathcal{A}$, thus any legitimate correlation kernel of GP must be positive definite.
There are several candidates for $\gamma_\cX$ defined on $\mathbb{R}^d\times \mathbb{R}^d$ such as squared exponential and Mat\'{e}rn classes~\citep{rasmussen2005}. For the examples in Section~\ref{sec:expr}, the following squared exponential kernel is used:
$\gamma_\cX(\bx, \bx^\prime) = \exp\left\{-\sum_{s=1}^d (x_s-x^\prime_s)^2/\lambda_s\right\}, $
where $\lambda_s>0, s=1,2,\ldots,d$.  Observe that for each coordinate $s$, $\gamma_\cX(\bx, \bx^\prime)$ is a decreasing function of $|x_s-x^\prime_s|$. Loosely speaking, $\eta(\bx,\bP)$ and $\eta(\bx^\prime,\bP)$ have higher correlation when $\bx$ and $\bx^\prime$ are close in $\cX$; a modeling assumption makes sense in many DOvS settings. For instance, in Figure~\ref{fig:cost_surface},   when the capacities are close, the expected costs are also close for a fixed expected service time.
For  $\gamma_\mathcal{M}$, we adopt the following:
\begin{equation}
\label{eq:Pkernel}
\gamma_\mathcal{M}(\bP,\bP^\prime) = \exp\left\{-\sum_{\ell=1}^L D^2(P_\ell,P^{\prime}_\ell)/\vartheta_\ell\right\}, 
\end{equation}
where $D(P_\ell,P^{\prime}_\ell)$ is some measure of closeness between the $\ell$th input models, $P_\ell$ and $P^{\prime}_\ell$, of $\bP$ and $\bP^\prime$, respectively, and $\vartheta_\ell>0$ for $\ell=1,2,\ldots,L$. 
In the space of probability distributions, $f$-divergence is a popular class of measures of closeness. 
However, not all $f$-divergences result in a pd kernel when used as $D(\cdot,\cdot)$ in~(\ref{eq:Pkernel}). For instance, the popular Kullback-Leibler (KL) divergence does not produce pd $\gamma_\mathcal{M}$. 
We introduce the sufficient conditions derived by~\cite{hein2004} for  $D(\cdot,\cdot)$ to produce positive definite $\gamma_\mathcal M$ in Appendix B. Some popular $f$-divergences such as the total variation, squared Hellinger distance, and Jenson-Shannon divergence satisfy the conditions (see Table~B.1 in Appendix B). 

\vspace{11pt}
\noindent
\textbf{Remark}: 
Since (\ref{eq:Pkernel}) is a product of $L$ kernels for $L$ independent inputs, it can be modified to take both parametric and nonparametric $P_\ell$'s. 
For instance, if $P_1$ is parametric and the squared exponential kernel is adopted for $P_1$, then $\gamma_\mathcal{M}(\bP,\bP^\prime) = \exp\left\{-(\theta_1-\theta_1^\prime)^2/\vartheta_1-\sum_{\ell=2}^L D^2(P_\ell,P^{\prime}_\ell)/\vartheta_\ell\right\}$, where $\theta_1$ and $\theta_1^\prime$ are the parameter vectors of $P_1$ and $P_1^\prime$, respectively. 
\vspace{11pt}

\sloppy
The parameters of our GP prior, $\beta_0, \tau^2, \blambda= \{\lambda_1,\lambda_2,\ldots,\lambda_d\}$ and $\bvartheta = \{\vartheta_1,\vartheta_2,\ldots,\vartheta_L\}$, can be estimated via maximum likelihood estimation (MLE) after sampling $n_0$ initial design solution-distributions pairs (design pairs for short) $(\bx_1,\bP_1), (\bx_2,\bP_2), \ldots, (\bx_{n_0},\bP_{n_0})$ and simulating $r_i \geq 2$ times at each $(\bx_i, \bP_i)$ (see Appendix C for details). The GP prior is then updated to its posterior conditional on the simulation results form the $n_0$ design pairs. More generally, suppose $n_t$ distinct $(\bx,\bP)$ pairs are sampled by the $t$th iteration of our sequential algorithm.  Let $\sY_t =\{\Ybar_1,\Ybar_2,\ldots,\Ybar_{n_t}\}$, where $\Ybar_i = \sum_{j=1}^{r_i}Y_j(\bx_i,\bP_i)/r_i$ and $Y_j(\bx_i,\bP_i)$ is the $j$th simulation output. Then, the joint posterior distribution of $\eta(\bx, \bP)$ and $\eta(\bx^\prime,\bP^\prime)$ for any $\bx, \bx^\prime \in\cX$ and $\bP,\bP^\prime\in\mathcal{M}_+^1(\mathcal{Z})$ conditional on $\sY_t$ is 
\small
\begin{align}
\label{eq:GPposterior}
& \left.\left(\begin{matrix}
\eta(\bx,\bP) \\
\eta(\bx^\prime,\bP^\prime)
\end{matrix} 
\right)\right|
\mathsf{Y}_t
\sim 
N\left(
\beta_0\mathbf{1}_2 - 
\left[
\begin{matrix}
\Sigma_t(\bx,\bP)^\top \\
\Sigma_t(\bx^\prime,\bP^\prime)^\top 
\end{matrix}
\right]
(\Sigma_t+{\Sigma}_t^\varepsilon)^{-1}(\mathsf{Y}_t-\beta_0\mathbf{1}_{n_t}), \right.   \\
&\left.
\tau^2 \left[
\begin{matrix}
1 & \gamma_\cX(\bx,\bx^\prime)\gamma_\mathcal{M}(\bP,\bP^\prime) \\
\gamma_\cX(\bx,\bx^\prime)\gamma_\mathcal{M}(\bP,\bP^\prime)  & 1
\end{matrix}
\right]
-
\left[
\begin{matrix}
\Sigma_t(\bx,\bP)^\top \\
\Sigma_t(\bx^\prime,\bP^\prime)^\top 
\end{matrix}
\right]
(\Sigma_t+\Sigma_t^\varepsilon)^{-1}
\left[
\begin{matrix}
\Sigma_t(\bx,\bP) & 
\Sigma_t(\bx^\prime,\bP^\prime)
\end{matrix}
\right]
\right),  \nonumber
\end{align} \normalsize
where $\Sigma_t(\bx,\bP)$ is $n_t \times 1$ covariance vector between $(\bx,\bP)$ and the $n_t$ simulated pairs constructed by kernel $\mathsf{k}$, $\Sigma_t$ is the Gram matrix of $\mathsf{k}$ and $n_t$ simulated pairs, and $\Sigma_t^\varepsilon$ is the variance-covariance matrix of the simulation errors of $\sY_t$. When all $(\bx,\bP)$ pairs are simulated independently, $\Sigma_t^\varepsilon$ is simply a $n_t\times n_t$ diagonal matrix with $v(\bx_i,\bP_i)/r_i$ on the $i$th diagonal. Since $v(\bx_i,\bP_i)$ is unknown, we use the sample variance, $S^2(\bx_i,\bP_i)=\sum_{j=1}^{r_i}(Y_j(\bx_i,\bP_i)-\bar{Y}(\bx_i,\bP_i))^2/(r_i-1)$, as its plug-in estimator.
In the following sections, we denote the posterior mean of GP at $(\bx,\bP)$ in the $t$th iteration by $\mu_t(\bx,\bP)$ for notational convenience.

\section{Risk set inference}
\label{sec:inference}

At the $t$th iteration, $S_\alpha(\delta)$ can be estimated by replacing $\E[Y(\bx;\bP)|\bP]$ in~(\ref{eq:risk.set}) with GP $\eta(\bx,\bP)$:
\begin{equation}
\label{eq:S.est}
\left\{\bx\in\cX\left|\Pr\{\eta(\bhx,\bP)-\eta(\bx,\bP)>\delta|\sY_t, \bz\}> \alpha \right.\right\},
\end{equation}
where the probability is taken with respect to the joint posterior distribution of $\bP$ and the GP conditional on $\sY_t$ and $\bz$. 
The probability in~(\ref{eq:S.est}) can be rewritten as
\begin{align}
&\int_{\bP} \Pr\{\eta(\bhx,\bP)-\eta(\bx,\bP)>\delta|\sY_t,\bP\}\pi(\bP|\bz)d\bP  \nonumber \\
&= \int_{\bP} \Pr\{\eta(\bhx,\bP)-\eta(\bx,\bP)-(\mu_t(\bhx,\bP)-\mu_t(\bx,\bP))>\delta - (\mu_t(\bhx,\bP)-\mu_t(\bx,\bP))|\sY_t,\bP\}\pi(\bP|\bz)d\bP \nonumber \\
&= \int_\bP\Phi\left(\frac{\mu_t(\bhx,\bP)-\mu_t(\bx,\bP)-\delta}{\sigma_t(\bhx,\bx,\bP)}\right)\pi(\bP|\bz)d\bP \label{eq:prob.integration}
\end{align}
where $\Phi(\cdot)$ is the cumulative distribution function (cdf) of the standard normal distribution and $\sigma_t(\bhx,\bx,\bP)$ is the standard deviation of $\eta(\bhx,\bP) - \eta(\bx,\bP)$  at the $t$th iteration. 
The integration in~(\ref{eq:prob.integration})  is analytically intractable and expensive to compute numerically. Especially, when all $L$ input distributions are modeled nonparametrically,~(\ref{eq:prob.integration}) is an integration over a $\sum_{\ell=1}^L u_\ell$-dimensional probability simplex. Instead, we can sample $\bP_1,\bP_2,\ldots,\bP_B \sim \pi(\bP|\bz)$ and approximate~(\ref{eq:prob.integration}) with its MC estimate. The resulting risk set estimator is
\begin{equation}
\label{eq:ShatMC}
\bar{S}^{t}_\alpha(\delta) \equiv \left\{\bx\in\cX\left|\frac{1}{B}\sum_{b=1}^B
\Phi\left(\frac{\mu_t(\bhx,\bP_b)-\mu_t(\bx,\bP_b)-\delta}{\sigma_t(\bhx,\bx,\bP_b)}\right)> \alpha\right. \right\}.
\end{equation}
For notational convenience, we use $\Sbar{t}$ instead of $\bar{S}^{t}_\alpha(\delta)$ in the remainder of the paper  assuming $\alpha$ and $\delta$ are fixed throughout the procedure. 
While $\bP_1,\bP_2,\ldots,\bP_B$ may be newly sampled at every iteration, we choose to sample them  once at the very beginning of our sequential risk set estimation algorithm and use the same sample throughout the algorithm. 
The former is computationally expensive as we need to recompute the covariance vector $\Sigma_t(\bx,\bP_b)$ for all new $(\bx,\bP_b)$ combinations at each iteration to update the GP posterior as in~(\ref{eq:GPposterior}). By fixing $\bP_1,\bP_2,\ldots,\bP_B$, we only need to compute at most one additional element of $\Sigma_{t+1}(\bx,\bP_b)$ from $\Sigma_t(\bx,\bP_b)$ that corresponds to the newly simulated solution-distributions pair. When nonparametric input models are used, such computational saving indeed makes a difference since nonparametric kernel~(\ref{eq:Pkernel}) is more expensive to compute than a parametric one. 

For our sequential procedure, we consider the following class of sampling decision at each iteration for some function $h(\cdot,\cdot)$ defined by a sampling criterion:
\begin{equation}
\label{eq:sampling.decision}
{\arg\min}_{(\bx,\bP)\in \cX\times \{\bP_1, \bP_2, \ldots, \bP_B\}} h(\bx,\bP).
\end{equation}
A good sampling criterion reduces the estimation error of $\Sbar{t}$ efficiently and is cheap to compute.
A sampling criterion that ensures good global fit of the GP model (e.g.\ minimizing the integrated mean squared error of GP) does not directly target the former and can be quite inefficient for the same reason as for the naive MC approach; some solutions may not require too much simulation effort to be excluded from the risk set. 
In the following sections, we introduce a sampling criterion that directly targets reducing the estimation error of $\Sbar{t}$ and discuss its efficient computation. 

\subsection{Sequential sampling criterion}
\label{subsec:criterion}
To measure the estimation error of $\Sbar t$, we define loss function $\sL(\cdot)$ that counts the number of incorrectly classified solutions given a risk set estimate:
\begin{equation}
\label{eq:CICC}
\sL(\Sbar t) = \sum\nolimits_{\bx\in\S}I\{\bx \notin \Sbar t\}+\sum\nolimits_{\bx\notin\S}I\{\bx \in \Sbar t\}.
\end{equation}
We use $\mathcal{I}_t$ to denote the solution-distributions pair that is selected for simulation at the $t$th iteration. 
Suppose $\mathcal{I}_t = (\bx,\bP)$. The difference in the loss function between the $t$th and $(t+1)$th iteration is, $\Delta \sL^t(\bx,\bP) \equiv \sL(\Sbar{t+1})-\sL(\Sbar t)$, is 
\begin{align*}
&\Delta \sL^t(\bx,\bP)=\sum\nolimits_{\bx\in\S}\left(I\{\bx \notin \Sbar{t+1}\}-I\{\bx \notin \Sbar t\}\right)
+\sum\nolimits_{\bx\notin\S}\left(I\{\bx \in \Sbar{t+1}\} - I\{\bx \in \Sbar t\}\right)\\
&= |\S \cap (\Sbar t \backslash \Sbar{t+1})| - |\S \cap (\Sbar{t+1} \backslash \Sbar{t})|
+ |\S^c \cap (\Sbar{t+1} \backslash \Sbar{t})| - |\S^c \cap (\Sbar t \backslash \Sbar{t+1})|.
\end{align*}
Our strategy is to simulate $(\bx,\bP)$ expected to reduce the loss function the most in the next iteration. This type of sampling criterion is referred to as a one-step look-ahead or myopic policy as its goal is to maximize the benefit of sampling in the next iteration instead of the cumulative benefit in the subsequent iterations until the simulation budget is exhausted by formulating a dynamic programming problem~\citep{frazierthesis}. The latter is computationally impossible in our context due to so-called ``curse of dimensionality.''

To compute the expectation of $\Delta \sL^t(\bx,\bP),$ we need $\S$, which is clearly unknown. Instead, a lower bound to  $\Delta \sL^t(\bx,\bP)$ can be derived without knowing $\S$ as: 
\small
\begin{align}
\Delta \sL^t(\bx,\bP) 
& \geq -|\S \cap (\Sbar t \backslash \Sbar{t+1})| - |\S \cap (\Sbar{t+1} \backslash \Sbar{t})|
- |\S^c \cap (\Sbar{t+1} \backslash \Sbar{t})| - |\S^c \cap (\Sbar t \backslash \Sbar{t+1})| \nonumber  \\
& = -|(\Sbar t\backslash \Sbar{t+1})\cup(\Sbar{t+1}\backslash \Sbar{t})|.
\label{eq:delta.CC}
\end{align}
\normalsize
Note that $(\Sbar t\backslash \Sbar{t+1})\cup(\Sbar{t+1}\backslash \Sbar{t})$ is the set of solutions whose classifications are changed in the $(t+1)$th iteration. The lower bound can be also viewed as a `plug-in' estimate of $\Delta \sL^t(\bx,\bP)$ by letting $\S \approx \Sbar{t+1}$, which gives $\Delta \sL^t(\bx,\bP) \approx$(\ref{eq:delta.CC}).
From~(\ref{eq:delta.CC})
\begin{equation}
\label{eq:CICC.LB}
\E[\Delta \sL^t(\bx,\bP)|\bz,\sY_t,\cI_t= (\bx,\bP)]\geq  -\E\left[\left.|(\Sbar t\backslash \Sbar{t+1})\cup(\Sbar{t+1}\backslash \Sbar{t})|\right|\bz,\sY_t,\cI_t= (\bx,\bP)\right].
\end{equation}
Notice that the expectation is conditional on $\bz$ and $\sY_t$ as well as the sampling decision at the $t$th iteration, $\mathcal{I}_t$. 
Minimizing the lower bound in~(\ref{eq:CICC.LB}) is equivalent to  maximizing the expected number of solutions whose classifications change from the previous iteration. In other words, we would like to sample $(\bx,\bP)$ such that the updated risk set is expected to be \emph{as different as possible from the current to hedge the risk of incorrect classification.}

To estimate the lower bound in~(\ref{eq:CICC.LB}), we start with the following equivalence
\begin{align}
&\E\left[\left.|(\Sbar t\backslash \Sbar{t+1})\cup(\Sbar{t+1}\backslash \Sbar{t})|\right|\bz,\sY_t,\cI_t= (\bx,\bP)\right] \nonumber \\ 
&=\E\left[\left.
\sum_{\bx^\prime\in\Sbar t} I(\bx^\prime \notin \Sbar{t+1})
+ \sum_{\bx^\prime\notin\Sbar t} I(\bx^\prime \in \Sbar{t+1})
\right|\bz,\sY_t,\cI_t= (\bx,\bP)\right] \nonumber \\
&= \sum_{\bx^\prime\in\Sbar t} \Pr\{\bx^\prime \notin \Sbar{t+1}| \bz,\sY_t, \cI_t= (\bx,\bP)\} 
+ \sum_{\bx^\prime\notin\Sbar t} \Pr\{\bx^\prime \in \Sbar{t+1}| \bz,\sY_t, \cI_t= (\bx,\bP)\}\nonumber \\
&=\sum_{\bx^\prime\in\Sbar t} \Pr\bigg\{\frac{1}{B}\sum_{b=1}^B
\Phi\left(\frac{\mu_{t+1}(\bhx,\bP_b)-\mu_{t+1}(\bx^\prime,\bP_b)-\delta}{\sigma_{t+1}(\bhx,\bx^\prime,\bP_b)}\right)\leq \alpha\bigg| \sY_t, \cI_t= (\bx,\bP) \bigg\} \nonumber \\
&\;\;\;\; + \sum_{\bx^\prime\notin\Sbar t} \Pr\bigg\{
\frac{1}{B}\sum_{b=1}^B
\Phi\left(\frac{\mu_{t+1}(\bhx,\bP_b)-\mu_{t+1}(\bx^\prime,\bP_b)-\delta}{\sigma_{t+1}(\bhx,\bx^\prime,\bP_b)}\right)> \alpha
 \bigg|\sY_t, \cI_t= (\bx,\bP)\bigg\}, \label{eq:sum.prob}
\end{align}
where the last equality exploits the definition of the risk set estimator in~(\ref{eq:S.est}). Notice that~(\ref{eq:sum.prob}) can be interpreted as the sum of probabilities such that each solution $\bx^\prime$ switches sides from the risk set to its complement or \textit{vice versa}. 
Let $\boldsymbol{\mu}_t\in\mathbb{R}^{|\cX|B}$ and $\mathbf{V}_t\in \mathbb{R}^{|\cX|B\times |\cX|B}$ denote the mean vector and the variance-covariance matrix of the posterior GP at all $\cX\times \{\bP_1,\bP_2,\ldots,\bP_B\}$ at the $t$th iteration. The probabilities in~(\ref{eq:sum.prob}) are completely determined by the distribution of $\boldsymbol{\mu}_{t+1}$ and $\mathbf{V}_{t+1}$ given the sampling decision, $\cI_t= (\bx,\bP)$. 
In Appendix D, we show  
\begin{equation}
\label{eq:pred.mean}
\boldsymbol{\mu}_{t+1}|\bz,\sY_t,\cI_t= (\bx,\bP) \sim N\left(\boldsymbol{\mu}_t, \frac{\bV_t(\bx,\bP)\bV_t(\bx,\bP)^
\top}{v(\bx,\bP)/R_t + V_t(\bx,\bP;\bx,\bP)}\right),
\end{equation}
where $R_t$ is the number of replications we obtain at the selected solution-distributions pair at the $t$th iteration, and $\bV_t(\bx,\bP)$ and $V_t(\bx,\bP; \bx^\prime,\bP^\prime)$ are the column and the element of $\bV_t$ corresponding to $(\bx,\bP)$ and $\{(\bx,\bP),(\bx^\prime,\bP^\prime)\}$, respectively. Recall that $v(\bx,\bP)$ is the stochastic error variance of $Y(\bx,\bP)$ given $\bP$. If $(\bx,\bP)$ is already simulated, then $S^2(\bx,\bP)$ can be used as a plug-in estimate. Otherwise, one can fit a prediction model for $v(\bx,\bP)$ based on observed sample variances up to the $t$th iteration. For the experiments in Section~\ref{sec:expr},  we use simple pooled variances; for each $\bx$, $v(\bx,\bP)$ is approximated by the average of $S^2(\bx,\bP_b)$ at $(\bx,\bP_b)$ pairs that are simulated so far.  
The predictive variance-covariance matrix, $\bV_{t+1}$, does not depend on the simulation output from $(\bx,\bP)$ and only depends on the identity of $(\bx,\bP)$. We show in Appendix D, $\sigma_{t+1}(\bhx,\bx^\prime,\bP_b)$ given $\cI_t= (\bx,\bP)$ can be computed deterministically as 
\begin{align}
\label{eq:pred.var}
\sigma_{t+1}^2(\bhx,\bx^\prime,\bP_b)&|\bz,\sY_t,\cI_t = (\bx,\bP)  \;\;
= \;\;
\sigma^2_{t}(\bhx,\bx^\prime,\bP_b) - \frac{\left(V_t(\bx,\bP;\bhx,\bP_b) - V_t(\bx,\bP;\bx^\prime,\bP_b)\right)^2}{v(\bx,\bP)/R_t + V_t(\bx,\bP;\bx,\bP)}.
\end{align}

Analytically computing the probabilities in~(\ref{eq:sum.prob}) is difficult. A similar issue often arises in BO, where the sampling criterion is difficult to compute analytically. A common approach is MC estimation; we can sample $N$ multivariate normal vectors from~(\ref{eq:pred.mean}), plug them instead of $\boldsymbol{\mu}_{t+1}$ in~(\ref{eq:sum.prob}) and compute their average. 
The challenge in our case, however, is that this estimate quickly becomes $0$ as $t$ increases for fixed $N$. Recall that~(\ref{eq:sum.prob}) is the sum of the probability each solution switching its classification in the next iteration. This becomes a rare event as $t$ increases, because our sampling criterion causes $\E[|(\Sbar t\backslash \Sbar{t+1})\cup(\Sbar{t+1}\backslash \Sbar{t})||\bz,\sY_t,\cI_t= (\bx,\bP)]\xrightarrow{a.s.} 0$ for all $(\bx,\bP)$ pairs as shown in Section~\ref{sec:procedure}. As a result, the estimated sampling criteria for all $(\bx,\bP)$ become $0$  unless we increase $N$ as $t$ increases, which implies growing computational cost.

Instead, we propose an approximation of~(\ref{eq:sum.prob}) that can be computed exactly (up to a numerical precision) is cheaper to compute than the MC estimation approach described above. Applying the first-order Taylor series approximation around $\boldsymbol{\mu}_{t+1} = \boldsymbol{\mu}_{t}$, 
\small
\begin{align}
&\frac{1}{B}\sum_{b=1}^B
\Phi\left(\frac{\mu_{t+1}(\bhx,\bP_b)-\mu_{t+1}(\bx^\prime,\bP_b)-\delta}{\sigma_{t+1}(\bhx,\bx^\prime,\bP_b)}\right) \approx 
\frac{1}{B}\sum_{b=1}^B 
\Phi\left(\frac{\mu_{t}(\bhx,\bP_b)-\mu_{t}(\bx^\prime,\bP_b)-\delta}{\sigma_{t+1}(\bhx,\bx^\prime,\bP_b)}\right) \nonumber\\
&\;\;\;+ \frac{1}{B}\sum_{b=1}^B \phi\left(\frac{\mu_{t}(\bhx,\bP_b)-\mu_{t}(\bx^\prime,\bP_b)-\delta}{\sigma_{t+1}(\bhx,\bx^\prime,\bP_b)}\right)
\frac{\mu_{t+1}(\bhx,\bP_b)-\mu_{t+1}(\bx^\prime,\bP_b)-(\mu_{t}(\bhx,\bP_b)-\mu_{t}(\bx^\prime,\bP_b))}{\sigma_{t+1}(\bhx,\bx^\prime,\bP_b)},
\label{eq:Taylor.approx}
\end{align} \normalsize
where $\phi(\cdot)$ is the standard normal probability density function (pdf).
Note that the right-hand-side of~(\ref{eq:Taylor.approx}) is simply a linear function of $\boldsymbol{\mu}_{t+1}$. 
Conditional on $\bz,\sY_t,$ and ${\cI_t= (\bx,\bP)}$, its distribution 
can be derived from~(\ref{eq:pred.mean}) as 
\begin{equation}
\label{eq:normal.dist}
N\left(\frac{1}{B}\sum_{b=1}^B 
\Phi\left(\frac{\mu_{t}(\bhx,\bP_b)-\mu_{t}(\bx^\prime,\bP_b)-\delta}{\sigma_{t+1}(\bhx,\bx^\prime,\bP_b)}\right) ,
\left(\mathbf{c}_{\bx^\prime}^\top \mathbf{w}_{\bx^\prime}(\bx,\bP)\right)^2\right),
\end{equation}
where $\mathbf{c}_{\bx^\prime}$ and $\mathbf{w}_{\bx^\prime}(\bx,\bP)$ are $B$-dimensional vectors whose $b$th elements are $\frac{1}{B\sigma_{t+1}(\bhx,\bx^\prime,\bP_b)}\phi\left(\frac{\mu_{t}(\bhx,\bP_b)-\mu_{t}(\bx^\prime,\bP_b)-\delta}{\sigma_{t+1}(\bhx,\bx^\prime,\bP_b)}\right)$ and $\frac{V_t(\bhx,\bP_b;\bx,\bP)-V_t(\bx^\prime,\bP_b;\bx,\bP)}{\sqrt{v(\bx,\bP)/R_t + V_t(\bx,\bP;\bx,\bP)}}$, respectively.
Therefore,~(\ref{eq:sum.prob}) is approximated by 
\small
\begin{align}
&\widehat{\E}\left[\left.|(\Sbar t\backslash \Sbar{t+1})\cup(\Sbar{t+1}\backslash \Sbar{t})|\right|\bz,\sY_t,\cI_t= (\bx,\bP)\right]  \label{eq:norm.cdf.approx} \\
&=\sum_{\bx^\prime \in \Sbar t} \Phi\left(
\frac{\alpha-\frac{1}{B}\sum_{b=1}^B 
\Phi\left(\frac{\mu_{t}(\bhx,\bP_b)-\mu_{t}(\bx^\prime,\bP_b)-\delta}{\sigma_{t+1}(\bhx,\bx^\prime,\bP_b)}\right)}{|\mathbf{c}_{\bx^\prime}^\top \mathbf{w}_{\bx^\prime}(\bx,\bP)|}
\right)
+ \sum_{\bx^\prime\notin\Sbar t} 
\Phi\left(
\frac{\frac{1}{B}\sum_{b=1}^B 
\Phi\left(\frac{\mu_{t}(\bhx,\bP_b)-\mu_{t}(\bx^\prime,\bP_b)-\delta}{\sigma_{t+1}(\bhx,\bx^\prime,\bP_b)}\right)-\alpha}{|\mathbf{c}^\top_{\bx^\prime} \mathbf{w}_{\bx^\prime}(\bx,\bP)|}
\right), \nonumber 
\end{align} \normalsize
which can be computed exactly up to a numerical precision. 

Since the classification decision for $\bx$ is based on the comparisons between $\mu_{t+1}(\bhx,\bP_b)$ and $\mu_{t+1}(\bx,\bP_b)$ at $b=1,2,\ldots,B$, we also consider  pairwise sampling of $(\bhx,\bP)$ and $(\bx,\bP)$, i.e., $\cI_t=\{(\bhx,\bP),(\bx,\bP), \bx \neq \bhx\}$. The lower bound on $\E[\Delta \sL^t(\bx,\bP)|\bz,\sY_t,\cI_t=\{(\bhx,\bP),(\bx,\bP), \bx \neq \bhx\}]$ and its equivalent expression can be derived similarly as in~(\ref{eq:CICC.LB}) and~(\ref{eq:sum.prob}), respectively with only one difference: the conditioning event changes from $\cI_t= (\bx,\bP)$ to $\cI_t=\{(\bhx,\bP),(\bx,\bP), \bx \neq \bhx\}$. 
Appendix D shows
\begin{align}
\label{eq:pred.mean2}
&\boldsymbol{\mu}_{t+1}|\bz,\sY_t,\cI_t=\{(\bhx,\bP),(\bx,\bP), \bx \neq \bhx\}  \sim N\left(
\boldsymbol{\mu}_t, 
\mathbf{C}
\mathbf{D}^{-\top}\mathbf{D}^{-1}
\mathbf{C}^\top
\right), \\
&\sigma^2_{t+1}(\bhx,\bx^\prime,\bP_b) |\bz,\sY_t,\cI_t=\{(\bhx,\bP),(\bx,\bP), \bx \neq \bhx\} \nonumber\\
&=\sigma^2_{t}(\bhx,\bx^\prime,\bP_b) - \mathbf{d}(\bx^\prime,\bP_b;\bx,\bP)^\top
\mathbf{D}^{-\top}\mathbf{D}^{-1}
\mathbf{d}(\bx^\prime,\bP_b;\bx,\bP), \label{eq:pred.var2}
\end{align}
where $\mathbf{C} = \left[
\sqrt{\frac{R_t}{v(\bhx,\bP)}}\bV_t(\bhx,\bP), \sqrt{\frac{R_t}{v(\bx,\bP)}}\bV_t(\bx,\bP)
\right]$, 
$\mathbf{D}$ is the lower Cholesky factor of 
\[
\left[
\begin{matrix}
1+ R_tV_t(\bhx,\bP;\bhx,\bP)/v(\bhx,\bP) &  R_tV_t(\bhx,\bP;\bx,\bP)/\sqrt{v(\bx,\bP)v(\bhx,\bP)} \\
R_tV_t(\bhx,\bP;\bx,\bP)/\sqrt{v(\bx,\bP)v(\bhx,\bP)} & 1+ R_tV_t(\bx,\bP;\bx,\bP)/v(\bx,\bP)
\end{matrix}
\right],\mbox{ and}
\]
\[
\mathbf{d}(\bx^\prime,\bP_b;\bx,\bP) = \left[
\begin{matrix}
\sqrt{\frac{R_t}{v(\bhx,\bP)}}(V_t(\bhx,\bP_b;\bhx,\bP)-V_t(\bx^\prime,\bP_b;\bhx,\bP))\\ 
\sqrt{\frac{R_t}{v(\bx,\bP)}}(V_t(\bhx,\bP_b;\bx,\bP)-V_t(\bx^\prime,\bP_b;\bx,\bP))
\end{matrix}
\right].
\] 
From~(\ref{eq:pred.mean2}), the predictive distribution of~(\ref{eq:Taylor.approx}) conditional on $\bz,\sY_t,$ and ${\cI_t= \{(\bhx,\bP),(\bx,\bP)\}}$ can be derived; it has the same mean as~(\ref{eq:normal.dist}), but the variance is 
$\mathbf{c}_{\bx^\prime}^\top \mathbf{U}_{\bx^\prime}(\bx,\bP)\mathbf{U}_{\bx^\prime}(\bx,\bP)^\top \mathbf{c}_{\bx^\prime},$ where $\mathbf{U}_{\bx^\prime}(\bx,\bP)$ is a $B\times 2$ matrix whose $b$th row is $\mathbf{d}(\bx^\prime,\bP_b;\bx,\bP)^\top\mathbf{D}^{-\top}$. Thus, 
$\widehat{\E}\left[\left.|(\Sbar t\backslash \Sbar{t+1})\cup(\Sbar{t+1}\backslash \Sbar{t})|\right|\bz,\sY_t,\cI_t= \{(\bhx,\bP),(\bx,\bP)\}\right]$ can be defined by replacing $\mathbf{c}_{\bx^\prime}^\top \mathbf{w}_{\bx^\prime}(\bx,\bP)$ in~(\ref{eq:norm.cdf.approx}) with $(\mathbf{c}_{\bx^\prime}^\top \mathbf{U}_{\bx^\prime}(\bx,\bP)\mathbf{U}_{\bx^\prime}(\bx,\bP)^\top \mathbf{c}_{\bx^\prime})^{1/2}$. 

Based on the estimated lower bounds of expected reduction in the loss function, we decide which $(\bx,\bP)$ (or $(\bhx,\bP)$ and $(\bx,\bP)$, if pair-wise sampling) to simulate next. Specifically, we solve~(\ref{eq:sampling.decision}) with 
\small
\begin{align*}
\begin{array}{l}
h(\bx,\bP) = \left\{
\begin{array}{l l}
-\widehat{\E}\left[\left.|(\Sbar t\backslash \Sbar{t+1})\cup(\Sbar{t+1}\backslash \Sbar{t})|\right|\bz,\sY_t,\cI_t= (\bx,\bP)\right], & \mbox{ if } \bx = \bhx, \\
\begin{array}{l}
 -\max\bigg\{\widehat{\E}\left[\left.|(\Sbar t\backslash \Sbar{t+1})\cup(\Sbar{t+1}\backslash \Sbar{t})|\right|\bz,\sY_t,\cI_t= (\bx,\bP)\right], \\
 \;\;\;\;\;\;\;\;\;\;\;\; \frac{1}{2}\widehat{\E}\left[\left.|(\Sbar t\backslash \Sbar{t+1})\cup(\Sbar{t+1}\backslash \Sbar{t})|\right|\bz,\sY_t,\cI_t=\{(\bhx,\bP),(\bx,\bP)\}\right]\bigg\}, 
 \end{array}
 & \mbox{ otherwise}.
\end{array}
 \right.
\end{array}
\end{align*}
\normalsize
Notice for pairwise sampling, we discount the reduction by a half since it requires twice the simulation effort.

\subsection{Distribution selection criterion}
\label{subsec:dist.selection}
Although we select the next $\bP$ to simulate among $\bP_1,\bP_2,\ldots,\bP_B$, we need to evaluate $h$ for $|\cX|B$ solution-distributions pairs to compare at each iteration. 
To reduce the computational burden, first notice that Problem~(\ref{eq:sampling.decision}) can be rewritten as $\min_{\bx \in \cX}\min_{\bP \in \{\bP_1,\bP_2,\ldots,\bP_B\}} h(\bx,\bP)$. Thus, if we can analytically solve the inner minimization problem for each $\bx$, then $h$ needs to be evaluated only $|\cX|$ times. Even though this is not possible, we can approximate the minimizer of the inner problem by a ``good'' distribution for each $\bx$. 

What characterizes a good distribution to be sampled with $\bx$?  For $\bx=\bhx,$ we argue that 
\begin{equation}
\label{eq:dist.xhat}
\bP^1_{\bhx}\equiv {\arg\max}_{\bP_b\in\{\bP_1,\bP_2,\ldots,\bP_B\}}V_t(\bhx,\bP_b;\bhx,\bP_b)
\end{equation}
is a good choice because the GP has the largest prediction error at $(\bhx,\bP^1_{\bhx})$ given $\bhx$. Note that ``$1$'' in $\bP^1_{\bhx}$ is to emphasize that only single sampling is considered for $\bhx$. 

For $\bx\neq\bhx$, in the same spirit of our sampling criterion in Section~\ref{subsec:criterion}, we would like to sample $\bP_b$ that is most likely to change the classification decision of $\bx$ in the next iteration to hedge the risk of misclassification.
Recall that each $\bP_b$ contributes $\frac{1}{B}\Phi\left(\frac{\mu_t(\bhx,\bP_b)-\mu_t(\bx,\bP_b)-\delta}{\sigma_t(\bhx,\bx,\bP_b)}\right)$ to the MC estimate of (\ref{eq:prob.integration}). Thus, the magnitude of  `local'  change in contribution when $\cI_t = (\bx,\bP_b)$ (or $\cI_t=\{(\bhx,\bP_b),(\bx,\bP_b)\}$) is 
\begin{equation}
\label{eq:local.contribution}
\frac{1}{B}\left|\Phi\left(\frac{\mu_{t+1}(\bhx,\bP_b)-\mu_{t+1}(\bx,\bP_b)-\delta}{\sigma_{t+1}(\bhx,\bx,\bP_b)}\right) - \Phi\left(\frac{\mu_t(\bhx,\bP_b)-\mu_t(\bx,\bP_b)-\delta}{\sigma_t(\bhx,\bx,\bP_b)}\right)\right|.
\end{equation}
We argue $\bP_b$ that maximizes the expected value of~(\ref{eq:local.contribution}) is a good candidate because the larger the expected change is, the more likely the classification of $\bx$ changes. 
Clearly, this is a local change as sampling $(\bx,\bP_b)$
will affect the GP means and variances at all solution-distributions pairs. However, focusing on the local change allows us to estimate its 
expectation cheaply since it only requires $\sigma_{t+1}(\bhx,\bx,\bP_b)$ and the predictive distribution of $\mu_{t+1}(\bhx,\bP_b)-\mu_{t+1}(\bx,\bP_b)$ instead of the entire $\boldsymbol{\mu}_{t+1}$.
The exact expectation of~(\ref{eq:local.contribution}) is difficult to obtain analytically. Instead, we can approximate it in a similar way to~(\ref{eq:Taylor.approx}). The first-order Taylor series approximation of~(\ref{eq:local.contribution}) at $\mu_{t+1}(\bhx,\bP_b) - \mu_{t+1}(\bx,\bP_b) = \mu_{t}(\bhx,\bP_b) - \mu_{t}(\bx,\bP_b)$ gives
\small
\begin{align}\label{eq:Taylor2}
&\frac{1}{B}\Bigg|\Phi\left(\frac{\mu_{t}(\bhx,\bP_b)-\mu_{t}(\bx,\bP_b)-\delta}{\sigma_{t+1}(\bhx,\bx,\bP_b)}\right) - \Phi\left(\frac{\mu_t(\bhx,\bP_b)-\mu_t(\bx,\bP_b)-\delta}{\sigma_t(\bhx,\bx,\bP_b)}\right) \\
&+\frac{1}{\sigma_{t+1}(\bhx,\bx,\bP_b)}\phi\left(\frac{\mu_{t}(\bhx,\bP_b)-\mu_{t}(\bx,\bP_b)-\delta}{\sigma_{t+1}(\bhx,\bx,\bP_b)}\right)
\left\{\mu_{t+1}(\bhx,\bP_b)-\mu_{t+1}(\bx,\bP_b) - \left(\mu_{t}(\bhx,\bP_b)-\mu_{t}(\bx,\bP_b)\right)
\right\}
\Bigg|, \nonumber
\end{align}
\normalsize
which is a folded normal random variable. 
Therefore, the expectation of~(\ref{eq:Taylor2}) can be derived from the following mean formula for a folded normal distribution~\citep{foldednormal}:
\begin{equation}
\label{eq:folded.normal}
\tfrac{1}{B}\left\{(1-2\Phi(-{a_1}/{a_2}))a_1 + 2a_2\phi(-{a_1}/{a_2})\right\},
\end{equation}
where $a_1$ and $a_2$ are the mean and the standard deviation of the normal random variable inside $|\cdot|$ in~(\ref{eq:Taylor2}), respectively. For 
$\cI_t = (\bx,\bP_b)$, $a_1 = \Phi\left(\frac{\mu_{t}(\bhx,\bP_b)-\mu_{t}(\bx,\bP_b)-\delta}{\sigma_{t+1}(\bhx,\bx,\bP_b)}\right) - \Phi\left(\frac{\mu_t(\bhx,\bP_b)-\mu_t(\bx,\bP_b)-\delta}{\sigma_t(\bhx,\bx,\bP_b)}\right)$ and 
$a_2 = \phi\left(\frac{\mu_{t}(\bhx,\bP_b)-\mu_{t}(\bx,\bP_b)-\delta}{\sigma_{t+1}(\bhx,\bx,\bP_b)}\right)\frac{|V_t(\bx,\bP;\bhx,\bP_b) - V_t(\bx,\bP;\bx,\bP_b)|}{\sigma_{t+1}(\bhx,\bx,\bP_b)\sqrt{v(\bx,\bP_b)/R_t + V_t(\bx,\bP_b;\bx,\bP_b)}}$, where $\sigma_{t+1}(\bhx,\bx,\bP_b)$ is given by~(\ref{eq:pred.var}). For pairwise sampling, ${\cI_t=\{(\bhx,\bP_b),(\bx,\bP_b),\bx\neq\bhx\}}$, $a_1$ has the same expression, but with $\sigma_{t+1}(\bhx,\bx,\bP_b)$ in~(\ref{eq:pred.var2}), and $a_2 = \phi\left(\frac{\mu_{t}(\bhx,\bP_b)-\mu_{t}(\bx,\bP_b)-\delta}{\sigma_{t+1}(\bhx,\bx,\bP_b)}\right)\frac{\sqrt{\mathbf{d}(\bx,\bP_b;\bx,\bP_b)^\top
\mathbf{D}^{-\top}\mathbf{D}^{-1}
\mathbf{d}(\bx,\bP_b;\bx,\bP_b)}}{\sigma_{t+1}(\bhx,\bx,\bP_b)}$, where $\mathbf{D}$ and $\mathbf{d}(\bx,\bP_b;\bx,\bP_b)$ are defined in Section~\ref{subsec:criterion}.
We define $\mathcal{H}^1_\bx(\bP_b)$ and $\mathcal{H}^2_\bx(\bP_b)$ as the resulting function~(\ref{eq:folded.normal}) for single and pairwise sampling cases, respectively. 

Let $\bP_\bx^1 = \arg\max_{\bP_b \in \{\bP_1,\bP_2,\ldots,\bP_B\}}\mathcal{H}^1_\bx(\bP_b)$ and $\bP_\bx^2 = \arg\max_{\bP_b \in \{\bP_1,\bP_2,\ldots,\bP_B\}}\mathcal{H}^2_\bx(\bP_b)$ for $\bx\neq\bhx$. In words, $\bP_\bx^1$ $(\bP_\bx^2)$ is the best input distribution to simulate with $\bx$ for single (pair-wise) sampling. 
Given $\bP_\bx^1$ and $\bP_\bx^2$ for all $\bx\in\cX$,~(\ref{eq:sampling.decision}) is modified to $\min_{\bx \in \cX} \tilde{h}(\bx)$, where
\small
\begin{align*}
\begin{array}{l}
\tilde{h}(\bx) = \left\{
\begin{array}{l l}
-\widehat{\E}\left[\left.|(\Sbar t\backslash \Sbar{t+1})\cup(\Sbar{t+1}\backslash \Sbar{t})|\right|\bz,\sY_t,\cI_t= (\bx,\bP_\bx^1)\right], & \mbox{ if } \bx = \bhx, \\
\begin{array}{l}
 -\max\bigg\{\widehat{\E}\left[\left.|(\Sbar t\backslash \Sbar{t+1})\cup(\Sbar{t+1}\backslash \Sbar{t})|\right|\bz,\sY_t,\cI_t= (\bx,\bP_\bx^1)\right], \\
 \;\;\;\;\;\;\;\;\;\;\;\; \frac{1}{2}\widehat{\E}\left[\left.|(\Sbar t\backslash \Sbar{t+1})\cup(\Sbar{t+1}\backslash \Sbar{t})|\right|\bz,\sY_t,\cI_t=\{(\bhx,\bP_\bx^2),(\bx,\bP_\bx^2)\}\right]\bigg\}, 
 \end{array}
 & \mbox{ otherwise}.
\end{array}
 \right.
\end{array}
\end{align*}
\normalsize
Thus, $\tilde{h}$ needs to be evaluated only $|\cX|$ times once $\bP_\bx^1$ and $\bP_\bx^2$ are found for each $\bx$.

The distribution selection problem discussed in this section can be interpreted as choosing a point on the support ($\bP_b$) to sample to estimate the probability that a $\eta(\bhx,\bP)-\eta(\bx,\bP)$ is above threshold $\delta$.  This is closely related to the Bayesian superset estimation  reviewed in Section~\ref{sec:lit}. For instance, the sequential sampling criteria proposed by~\cite{Bect2012} can be applied to the distribution selection problem. However, their criteria are more expensive to compute than ours as they measure the effect of sampling each candidate $\bP_b$ to all $\bP_1,\bP_2,\ldots,\bP_B$ and involve a numerical integration. Since we need to solve this problem twice (for single and pairwise sampling) for each $\bx\in\cX$, such computational overhead is undesirable. 

In the same setting, \cite{echard2011} propose to sample $\bP_b\in\{\bP_1,\bP_2,\ldots,\bP_B\}$ such that the marginal event,
$\{\eta(\bhx,\bP_b)-\eta(\bx,\bP_b)>\delta\}$, is the most uncertain. This is equivalent to choosing $\bP_b$ with the smallest $|\delta-(\mu_t(\bhx,\bP_b)-\mu_t(\bx,\bP_b))|/\sigma_t(\bhx,\bx,\bP_b)$. Since this does not involve computing the predictive distribution, it is cheaper to compute than our criterion. However, this criterion can be quite inefficient if there exists $\bP_{b^\prime}$ such that $\E[Y(\bhx,\bP_{b^\prime})|\bP_{b^\prime}]-\E[Y(\bx,\bP_{b^\prime})|\bP_{b^\prime}]$ is close to $\delta$, which causes $\bP_{b^\prime}$ to get disproportionately large sampling effort. 
We compare the performance of this sampling criterion with ours in Section~\ref{subsec:mm1k}.

\section{Sequential risk set inference procedure}
\label{sec:procedure}

We present the sequential risk set inference (SRSI) procedure in Algorithm~\ref{alg:1}. The procedure selects the next solution-distributions pair to simulate using the sampling criteria introduced in Section~\ref{sec:inference}. 

\begin{algorithm} [tb]
\singlespacing
\caption{Sequential risk set inference procedure (SRSI)}
\label{alg:1}
\begin{algorithmic}[1]
\State Initialize $B, n_0, r$ and $\{R_t\}$ for the algorithm. 
\State From real-world data $\bz$, update the posterior distribution, $\pi(\bP|\bz)$.
\State Sample $\bP_1,\bP_2,\ldots, \bP_B$ from $\pi(\bP|\bz)$.  \label{step:boot}
\State Select $n_0$ initial $(\bx,\bP)$ pairs from $\bx\in\mathcal{\cX}$ and $\bP\in\{\bP_1,\bP_2,\ldots, \bP_B\}$, simulate $r$ replications at each to obtain $\mathsf{Y}_0$, and estimate the parameters of the GP prior via MLE. $t\leftarrow 0$.
\While {simulation budget remains} 
\State Update GP posterior mean, $\boldsymbol{\mu}_{t}$, and covariance matrix, $\mathbf{V}_t$, conditional on $\sY_t$. 
\State Compute $\sigma_t(\bhx,\bx,\bP)$ for all $\bx\neq\bhx$ and $\bP \in \{\bP_1,\bP_2,\ldots,\bP_B\}$. 
\State Find $\bar{S}^{t} \equiv \left\{\bx\in\cX\left|\frac{1}{B}\sum_{b=1}^B
\Phi\left(\frac{\mu_t(\bhx,\bP_b)-\mu_t(\bx,\bP_b)-\delta}{\sigma_t(\bhx,\bx,\bP_b)}\right)> \alpha\right. \right\}$.
\State Find $\bP^1(\bhx) =  {\arg\max}_{\bP_b \in \{\bP_1,\bP_2,\ldots,\bP_B\}} V_t(\bhx,\bP_b;\bhx,\bP_b)$. 
\State For each $\bx\neq\bhx$, find  $\bP^1(\bx) =  {\arg\max}_{\bP_b \in \{\bP_1,\bP_2,\ldots,\bP_B\}} \mathcal{H}^1_\bx(\bP_b)$ and
$\bP^2(\bx) =  {\arg\max}_{\bP_b \in \{\bP_1,\bP_2,\ldots,\bP_B\}} \mathcal{H}^2_\bx(\bP_b)$.
\State \label{step:opt} Find $\tilde{\bx} = \arg\min_{\bx\in\cX} \tilde{h}(\bx)$.
\If {$\tilde{\bx}=\bhx$}
\State $\tilde{\bP} = \bP^1(\tilde{\bx})$
\Else 
\If {$\left\{ \begin{array}{l}
\widehat{\E}\left[\left.|(\Sbar t\backslash \Sbar{t+1})\cup(\Sbar{t+1}\backslash \Sbar{t})|\right|\bz,\sY_t,\cI_t= (\tilde{\bx},\bP_{\tilde{\bx}}^1)\right]  \\
>\frac{1}{2}\widehat{\E}\left[\left.|(\Sbar t\backslash \Sbar{t+1})\cup(\Sbar{t+1}\backslash \Sbar{t})|\right|\bz,\sY_t,\cI_t=\{(\bhx,\bP_{\tilde{\bx}}^2),(\tilde{\bx},\bP_{\tilde{\bx}}^2)\}\right]
\end{array}\right\}$  }
\State $\tilde{\bP} = \bP^1_{\tilde{\bx}}$
\Else
\State $\tilde{\bP} = \bP^2_{\tilde{\bx}}$
\EndIf
\EndIf
\State Run $R_t$ replications at $(\tilde{\bx},\tilde{\bP})$ and update $\sY_{t+1}$. $t\leftarrow t+1$.
\EndWhile
\State Update GP posterior mean and covariance matrix conditional on $\sY_t$. Find and return $\Sbar t$.
\end{algorithmic}
\end{algorithm}

In the following, we discuss asymptotic properties of SRSI. 
Proofs of all theorems below can be found in Appendix E. Before we proceed, let us define the following estimator of $\S$: 
\[
\tilde{S}^B_\alpha(\delta)  \equiv \left\{\bx\in\cX\left|\frac{1}{B}\sum_{b=1}^B
I\left(\E[Y(\bhx;\bP_b)|\bP_b]-\E[Y(\bx;\bP_b)|\bP_b]>\delta\right)> \alpha\right. \right\},
\]
where $\bP_1,\bP_2,\ldots,\bP_B$ are the same as those from Step~(3) of the algorithm. In words, $\tilde{S}^B_\alpha(\delta)$ is a MC estimator of $\S$ assuming we can evaluate $\E[Y(\bx;\bP_b)|\bP_b]$ exactly without simulation error, i.e., the best-possible estimator if no analytical expression of $\E[Y(\bx;\bP_b)|\bP_b]$ is available. The following theorem shows that $\Sbar t$ indeed converges to $\tilde{S}^B_\alpha(\delta)$ as $t\to\infty$. 

\begin{theorem}
\label{thm:asymptotic} 
Suppose either all $L$ input distributions are nonparametrically modeled with the Dirichlet prior/posterior distribution or if there is  any parametric $P_\ell$,  its parameter $\theta_\ell$ has a continuous posterior distribution.
Also suppose $B$ is chosen so that $\alpha$ is not a multiple of $1/B$ and $\S$ is replaced with $\tilde{S}^B_\alpha(\delta)$ in the definition of $\sL$.
If SRSI runs  without stopping, then 
1) $\Sbar t \xrightarrow{a.s.} \tilde{S}^B_\alpha(\delta)$; and 2) $-|(\Sbar t\backslash \Sbar{t+1})\cup(\Sbar{t+1}\backslash \Sbar{t})|\big|\bz,\sY_t,\cI_t\xrightarrow{a.s.}\Delta \sL^t(\bx,\bP_b)$ for both $\cI_t=(\bx,\bP_b)$ and $\cI_t=\{(\bhx,\bP_b),(\bx,\bP_b),\bx\neq\bhx\}$ uniformly for all $(\bx,\bP_b)$ pairs. 
\end{theorem}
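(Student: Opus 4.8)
The plan is to reduce both claims to a single mechanism: under the greedy sampling rule the posterior GP eventually ``decides'' the sign of every margin $g(\bx,\bP_b)\equiv\E[Y(\bhx;\bP_b)|\bP_b]-\E[Y(\bx;\bP_b)|\bP_b]-\delta$ at every design distribution. First I would record two genericity facts that hold almost surely under the stated continuity of $\pi(\bP|\bz)$ (Dirichlet $\bP$, or continuous parametric $\theta_\ell$): (i) $g(\bx,\bP_b)\neq 0$ for all $\bx\neq\bhx$ and all $b$, so there are no ties at the threshold $\delta$; and (ii) since $\alpha$ is not a multiple of $1/B$, the Monte Carlo count $\frac1B\sum_b I(g(\bx,\bP_b)>0)$ never equals $\alpha$, so each $\bx$ sits strictly inside or outside $\tilde S^B_\alpha(\delta)$ with separation at least $1/(2B)$ from the decision boundary. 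This uniform margin is what ultimately yields stability.

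For Part 1 I would build on the monotonicity of the posterior covariance. The matrix $\bV_t$ is non-increasing in the Loewner order as data accumulate, so each self-variance $V_t(\bx,\bP_b;\bx,\bP_b)$ and each difference variance $\sigma^2_t(\bhx,\bx,\bP_b)$ is non-increasing, bounded, hence convergent; moreover any pair sampled infinitely often has its self-variance driven to $0$ and its posterior mean driven to the true conditional mean via the Kalman-type update behind~(\ref{eq:pred.var}). The quantitative engine is that the realized reductions telescope, $\sum_t\mathrm{tr}(\bV_t-\bV_{t+1})\leq\mathrm{tr}(\bV_0)<\infty$, so the one-step reduction at the selected pair vanishes; feeding this into the closed forms~(\ref{eq:pred.var}),~(\ref{eq:pred.var2}) and the folded-normal formula~(\ref{eq:folded.normal}) shows the maximized criterion $\max_{(\bx,\bP_b)}\widehat{\E}[\,|(\Sbar t\backslash\Sbar{t+1})\cup(\Sbar{t+1}\backslash\Sbar t)|\,]\to 0$ almost surely. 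I would then close Part 1 by contradiction: if $\Sbar t\not\to\tilde S^B_\alpha(\delta)$ then, as $\cX$ is finite, some $\bx^\ast$ is misclassified infinitely often, which forces at least one pivotal pair $(\bx^\ast,\bP_b)$ or $(\bhx,\bP_b)$ to retain an ``undecided'' term $\Phi\!\left(\frac{\mu_t(\bhx,\bP_b)-\mu_t(\bx^\ast,\bP_b)-\delta}{\sigma_t(\bhx,\bx^\ast,\bP_b)}\right)$ bounded away from $I(g(\bx^\ast,\bP_b)>0)$. Because a single such term can move the contribution sum by up to $1/B$ and the margin is only $1/(2B)$, sampling that pair carries an expected classification change bounded below; the greedy rule then selects it infinitely often, sending $\sigma_t\to0$ and the term to the correct indicator, contradicting persistent misclassification. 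Hence every pivotal term converges to its indicator, the contribution sum converges to the Monte Carlo count, and the uniform margin makes the resulting classification correct and eventually constant, so $\Sbar t=\tilde S^B_\alpha(\delta)$ for all large $t$.

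For Part 2 I would use the identity that the lower bound $-|(\Sbar t\backslash\Sbar{t+1})\cup(\Sbar{t+1}\backslash\Sbar t)|$ equals $\Delta\sL^t(\bx,\bP_b)$ exactly when the loss-increasing changes $|\tilde S^B_\alpha(\delta)\cap(\Sbar t\backslash\Sbar{t+1})|$ and $|(\tilde S^B_\alpha(\delta))^c\cap(\Sbar{t+1}\backslash\Sbar t)|$ both vanish. By Part 1, for $t$ large $\Sbar t=\tilde S^B_\alpha(\delta)$ almost surely, and since every contribution sum lies at distance at least $1/(2B)$ from $\alpha$, a single extra sample under any hypothesized $\cI_t=(\bx,\bP_b)$ or $\cI_t=\{(\bhx,\bP_b),(\bx,\bP_b)\}$ perturbs each $\Phi(\cdot)$ by $o(1)$ and cannot cross the threshold; thus $\Sbar{t+1}=\tilde S^B_\alpha(\delta)=\Sbar t$ regardless of $\cI_t$. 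Both the lower bound and $\Delta\sL^t$ are then identically $0$, so their difference converges to $0$ uniformly over the finitely many pairs and for both sampling schemes.

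The step I expect to be the main obstacle is the stabilization argument inside Part 1: ruling out that the greedy rule ``starves'' a pair that stays pivotal for some solution while it keeps sampling others. The delicate point is a solution whose contribution sum hovers near $\alpha$ because of one under-sampled $\bP_b$; the argument must tie a persistently undecided term to a non-vanishing expected classification change (and hence to a non-vanishing variance reduction that telescoping forbids), and it must invoke the no-multiple-of-$1/B$ separation to convert ``the term cannot flip the sum'' into correct classification. Making this equivalence airtight, rather than the routine GP-consistency and telescoping facts, is where the real work lies.
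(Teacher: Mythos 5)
Your high-level architecture matches the paper's (show the sampling rule cannot starve any pair, invoke GP posterior degeneracy at infinitely sampled pairs, use the ``$\alpha$ not a multiple of $1/B$'' condition to keep contribution sums away from the threshold, then read off both parts), but the step you yourself flag as ``the main obstacle'' is precisely the content of the paper's two key lemmas, and your sketch of it does not go through as stated. First, your contradiction argument asserts that a persistently pivotal pair ``carries an expected classification change bounded below,'' so the greedy rule must select it infinitely often. But the algorithm does not greedily maximize the global criterion over all $|\cX|B$ pairs: it first selects a distribution for each $\bx$ via the local folded-normal criteria $\mathcal{H}^1_\bx(\bP_b)$ and $\mathcal{H}^2_\bx(\bP_b)$, and only then compares solutions via $\tilde{h}$. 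A pivotal pair with large global value need never be the $\mathcal{H}$-optimal distribution for its solution. The paper closes this with a dedicated lemma proving that once $\bx\neq\bhx$ is sampled infinitely often, every $(\bx,\bP_b)$ \emph{and} every $(\bhx,\bP_b)$ must be too, via the asymptotic comparison $\mathcal{H}^1_\bx(\bP_b)\to 0$ at infinitely sampled pairs while $\mathcal{H}^2_\bx(\bP_b)$ (respectively $\mathcal{H}^1_\bx(\bP_{b^\prime})$ at a starved pair) stays strictly positive, forcing the rule to switch. Your telescoping bound $\sum_t \mathrm{tr}(\bV_t-\bV_{t+1})<\infty$ only controls reductions along the \emph{selected} pairs and says nothing about the $\mathcal{H}$-values of starved ones, so it cannot substitute for this lemma.

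Second, even granting that the contradiction reduces to showing the approximate criterion $\widehat{\E}\bigl[\,|(\Sbar t\backslash \Sbar{t+1})\cup(\Sbar{t+1}\backslash \Sbar{t})|\,\big|\,\bz,\sY_t,\cI_t=(\bx,\bP)\bigr]$ of a starved solution stays bounded away from zero, that requires ruling out cancellation in the inner product $\mathbf{c}_{\bx}^\top \mathbf{w}_{\bx}(\bx,\bP)$: individual entries being nonzero does not prevent the sum from vanishing. The paper's proof establishes a sign alignment --- all entries of $\mathbf{c}_{\bx}$ are nonnegative with the entry at $\bP$ strictly positive in the limit, and all entries of $\mathbf{w}_{\bx}(\bx,\bP)$ are nonpositive in the limit \emph{because} every $(\bhx,\bP_b)$ is infinitely sampled (so $V_t(\bhx,\bP_b;\bx,\bP)\to 0$) --- which again depends on the no-starvation lemma you skipped. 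Two smaller points: your claimed separation ``at least $1/(2B)$'' between $\alpha$ and the Monte Carlo count is backwards ($1/(2B)$ is the \emph{largest} possible distance to a multiple of $1/B$; the correct statement is a fixed positive $\min_k|\alpha-k/B|$), and your framework tolerates finitely sampled solutions provided they are correctly classified, in which case you additionally need the paper's continuity argument that, for a finitely sampled $\bx$, the limiting $\Phi$-sum is a continuous function of the sampled distributions and so avoids $\alpha$ with probability one under the stated conditions on $\pi(\bP|\bz)$ --- the genericity fact $g(\bx,\bP_b)\neq 0$ covers only the degenerate (infinitely sampled) case. Your Part~2 is essentially the paper's argument and is fine conditional on Part~1, so the gap is concentrated in, and confined to, the no-starvation mechanism you acknowledged but did not supply.
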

\noindent Note that the conditions on $\bP$ and $B$ in Theorem~\ref{thm:asymptotic} are needed to ensure there is no $\bx$ such that $\frac{1}{B}\sum_{b=1}^B
I\left(\E[Y(\bhx;\bP_b)|\bP_b]-\E[Y(\bx;\bP_b)|\bP_b]>\delta\right)=\alpha$. 
The second part of Theorem~\ref{thm:asymptotic} states that our lower bound to the exact one-step reduction in the loss function becomes asymptotically tight. In the proof of Theorem~\ref{thm:asymptotic}, we also show that  $\E[|(\Sbar t\backslash \Sbar{t+1})\cup(\Sbar{t+1}\backslash \Sbar{t})||\bz,\sY_t,\cI_t]\xrightarrow{a.s.} 0$ for both $\cI_t=(\bx,\bP_b)$ and $\cI_t=\{(\bhx,\bP_b),(\bx,\bP_b),\bx\neq\bhx\}$, which is the result mentioned in Section~\ref{subsec:criterion}.

Although SRSI uses $\bP_1,\bP_2,\ldots,\bP_B$ sampled in Step~\ref{step:boot} throughout the procedure for computational efficiency, at any finite $t$ the GP posterior can be evaluated at $\cX \times \{\bP_1,\bP_2,\ldots,\bP_{\tilde{B}}\}$ for $\tilde{B}>B$ to obtain a more accurate estimate of the risk set. The following theorem facilitates this.  
\begin{theorem}
\label{thm:Bincrease} Suppose SRSI is stopped  at $t=T$. For any $\bP_1,\bP_2,\ldots,\bP_{\tilde{B}}\overset{i.i.d.}{\sim} \pi(\bP|\bz)$, as $\tilde{B}\to\infty,$
$$\bigg\{\bx\in\cX\bigg|\tfrac{1}{\tilde{B}}\sum_{b=1}^{\tilde{B}}
\Phi\left(\tfrac{\mu_T(\bhx,\bP_b)-\mu_T(\bx,\bP_b)-\delta}{\sigma_T(\bhx,\bx,\bP_b)}\right)> \alpha \bigg\} \xrightarrow{a.s.} \left\{\bx\in\cX\left|\Pr\{\eta(\bhx,\bP)-\eta(\bx,\bP)>\delta|\sY_T, \bz\}> \alpha \right.\right\}.$$
\end{theorem}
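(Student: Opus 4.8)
The plan is to recognize the left-hand side as a Monte Carlo average of a bounded functional, invoke the strong law of large numbers, and then lift the resulting pointwise convergence to convergence of the sets using the finiteness of $\cX$. First I would fix the stopping iteration $T$, so that the posterior mean $\mu_T$ and the posterior standard deviation $\sigma_T(\bhx,\bx,\cdot)$ are determined by $\sY_T$ and $\bz$; the only randomness in the statement then resides in the fresh i.i.d.\ draws $\bP_1,\ldots,\bP_{\tilde B}$. For each $\bx\in\cX$ I would introduce the bounded, measurable function
\[
g_\bx(\bP)\;=\;\Phi\!\left(\frac{\mu_T(\bhx,\bP)-\mu_T(\bx,\bP)-\delta}{\sigma_T(\bhx,\bx,\bP)}\right)\in[0,1],
\]
whose expectation under $\pi(\bP|\bz)$, namely $\int g_\bx(\bP)\,\pi(\bP|\bz)\,d\bP$, equals $p_\bx:=\Pr\{\eta(\bhx,\bP)-\eta(\bx,\bP)>\delta\,|\,\sY_T,\bz\}$ by the identity~(\ref{eq:prob.integration}) evaluated at $t=T$. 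Since $g_\bx$ is bounded it is integrable, so the strong law of large numbers yields $\frac{1}{\tilde B}\sum_{b=1}^{\tilde B} g_\bx(\bP_b)\xrightarrow{a.s.} p_\bx$ as $\tilde B\to\infty$.

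Next I would convert this into set convergence. Because $\cX$ is finite, the intersection over $\bx\in\cX$ of the finitely many almost-sure convergence events is again almost sure, so on a single full-measure event all $|\cX|$ empirical averages converge simultaneously to their limits $p_\bx$. On that event, for any $\bx$ with $p_\bx>\alpha$ the empirical average eventually exceeds $\alpha$ and $\bx$ enters the left-hand set, while for any $\bx$ with $p_\bx<\alpha$ the empirical average eventually drops below $\alpha$ and $\bx$ is excluded. Taking the maximum of these finitely many random thresholds produces a finite $\tilde B_0$ beyond which the left-hand set coincides with $\{\bx\in\cX:p_\bx>\alpha\}$, which by the display above is exactly the right-hand set. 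This delivers the claimed almost-sure set convergence.

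The hard part will be the boundary configuration $p_\bx=\alpha$: there the empirical average still converges to $\alpha$ but fluctuates around it, so the indicator $\{\cdot>\alpha\}$ need not stabilize and $\bx$ can switch sides infinitely often. Since $p_\bx$ is deterministic once $T$ is fixed, this must be ruled out by a non-degeneracy hypothesis---namely that $p_\bx\neq\alpha$ for every $\bx\in\cX$---which plays the same role as the condition in Theorem~\ref{thm:asymptotic} that keeps $\alpha$ off the attainable threshold values and holds generically under the stated continuity assumptions on the posterior (the Dirichlet posterior for nonparametrically modeled inputs, or a continuous posterior for any parametric parameter). With the boundary excluded, the two-sided argument of the previous paragraph applies to every $\bx$, completing the proof.
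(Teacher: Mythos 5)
Your main line---fixing $T$ so that $\mu_T$ and $\sigma_T(\bhx,\bx,\cdot)$ are determined by $(\sY_T,\bz)$, defining the bounded functional $g_\bx(\bP)\in[0,1]$ whose mean under $\pi(\bP|\bz)$ is $p_\bx$ via the identity~(\ref{eq:prob.integration}), invoking the strong law of large numbers, and using finiteness of $\cX$ to get a single full-measure event on which the estimated set stabilizes once every $p_\bx$ is separated from $\alpha$---is exactly the paper's argument for its second step. (The paper invokes the Glivenko--Cantelli theorem there, which is heavier than needed; your plain SLLN suffices since $\cX$ is finite and each $g_\bx$ is bounded.) The one substantive divergence is the boundary case $p_\bx=\alpha$. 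You correctly identify it as the only point where set convergence can fail, but you resolve it by importing a non-degeneracy \emph{hypothesis}, whereas the paper \emph{proves} it: from~(\ref{eq:GPposterior}), $\mu_T(\cdot,\bP)$ is a linear function of the $n_T$ sample means in $\sY_T$ and the posterior variances and covariances are continuous functions of the sample variances, so $p_\bx=\int_\bP\Phi\left(\frac{\mu_T(\bhx,\bP)-\mu_T(\bx,\bP)-\delta}{\sigma_T(\bhx,\bx,\bP)}\right)\pi(\bP|\bz)\,d\bP$ is a continuous function of the (normally distributed, hence continuously distributed) simulation data, and therefore $p_\bx\neq\alpha$ with probability one over the simulation sample paths. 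This makes the theorem hold as stated, almost surely jointly in $(\sY_T,\{\bP_b\})$, with no added assumption.

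Relatedly, your last paragraph misattributes the source of the genericity: you appeal to the continuity conditions on the posterior $\pi(\bP|\bz)$ from Theorem~\ref{thm:asymptotic}, but $p_\bx$ integrates $\bP$ out, so continuity of the posterior of $\bP$ by itself says nothing about whether the resulting \emph{number} equals $\alpha$; those conditions serve a different purpose in Theorem~\ref{thm:asymptotic} (ruling out that the limiting fraction $\frac{1}{B}\sum_{b=1}^B I(\cdot)$ over the fixed $B$ atoms equals $\alpha$ as $t\to\infty$). At fixed finite $T$ the relevant continuous randomness is the simulation noise in $\sY_T$, which is precisely what the paper exploits. With that substitution your argument closes the gap and coincides with the paper's proof.
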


Even though SRSI is designed to efficiently estimate the risk set for specific $\alpha$ and $\delta$, the posterior GP from the procedure can be used to estimate risk sets for other $\alpha$ and $\delta$ values, which we demonstrate empirically in Section~\ref{subsec:mm1k}.

\section{Empirical performance}
\label{sec:expr}
In this section, we show empirical performance of SRSI using two examples. The first is the M/M/$1$/$k$ queue example  introduced in Section~\ref{sec:riskset}. This problem has a known expression for the objective function, thus lets us evaluate the estimation error of the risk set obtained from SRSI. 
We compare the performance of SRSI with a naive MC estimation method using the same simulation budget. We also test two variations of the SRSI with different distribution selection criteria to show effectiveness of our criterion introduced in Section~\ref{subsec:dist.selection}.

The second is a more realistic example simplified from the ambulance dispatching center location problem introduced in Section~\ref{sec:intro}. Using this example, we illustrate practical usage of the risk set.

\subsection{$M/M/1/k$ queue example}
\label{subsec:mm1k}

As introduced in Section~\ref{sec:riskset}, the objective function of this problem is the expected cost per customer and $\bx=k$. Recall that $P_1^c$ and and $P_2^c$ are Exp($1$) and Exp$(1.1)$, respectively. 
The input distributions are modeled nonparametrically using the Dirichlet prior described in Section~\ref{subsec:inputmodeling} with the concentration parameter, $\kappa_1 = \kappa_2 = 1$. We implemented a discrete-event simulator that modifies the Lindley equation for a uncapacitated single server queue~\citep{nelsontextbook} to incorporate the system capacity. Although the simulator can generate interarrival times and service times directly from $\bP=\{P_1,P_2\}$  sampled from the posterior Dirichlet distribution, we chose to first compute the means of $P_1$ and $P_2$ and use them as parameters for exponential distributions to generate interarrival and service times. This facilitates performance evaluation presented below as we can use the exact expected cost expression in Appendix A. To remove the initial bias of the simulation, we sampled the initial number of customers in the system from the steady-state distribution of number in system for M/M/$1$/$k$ queue. The simulation output from a single replication given $\bx$ and $\bP$, $Y(\bx;\bP)$, is the average cost of the $2{,}000$ customers generated within the replication. 

The solutions in contention are $\cX = \{1,2,\ldots,50\}$ and we set $\delta = 1$ and $\alpha = 0.2$. For GP prior, we adopted the squared exponential kernel for $\gamma_{\cX}$ and nonparametric kernel~(\ref{eq:Pkernel}) for $\gamma_{\mathcal{M}}$ with the squared Hellinger distance as $D$. 

We compare the risk set estimates from SRSI with those from the following procedures:
\begin{itemize}
\item Naive MC (NMC): all $|\cX|B$ solution-distribution pairs are assigned the equal number of replications, $N$, given the total simulation budget and the risk set is estimated by
$
\left\{\bx\in\cX\left|\frac{1}{B}\sum_{b=1}^B
I(\bar{Y}(\bhx;\bP_b) - \bar{Y}(\bx;\bP_b) > \delta)> \alpha\right. \right\},
$
where $\bar{Y}(\bhx;\bP_b)$ is the sample average of $N$ replications at $(\bx,\bP_b)$.
\item SRSI-m: SRSI with modified $\mathcal{H}^1_\bx(\bP_b) = \mathcal{H}^2_\bx(\bP_b) = -\frac{|\delta-(\mu_t(\bhx,\bP_b)-\mu_t(\bx,\bP_b))|}{\sigma_t(\bhx,\bx,\bP_b)}$ for $\bx\neq\bhx$.
\item SRSI-v: SRSI with modified $\mathcal{H}^1_\bx(\bP_b) = \mathcal{H}^2_\bx(\bP_b) = -\sigma_t(\bhx,\bx,\bP_b)$ for $\bx\neq\bhx$.
\end{itemize}
The SRSI-m and SRSI-v are variations of SRSI, where the former selects $\tilde{\bP} = \bP_b$ minimizing the marginal sampling criterion suggested by~\cite{echard2011} and the latter selects $\bP_b$ with the largest posterior variance of $\eta_t(\bhx,\bP_b)-\eta_t(\bx,\bP_b)$, respectively, at each iteration. 

We repeated each procedure $120$ times while changing the random number seeds from $1$ to $120$. For each run, a new  ``real-world'' sample $\bz$ is generated by sampling $m=100$ i.i.d.\ random variates from each of $P_1^c$ and $P_2^c$. Since all procedures use the same random number seed for each run, they all share the same $\pi(\bP|\bz)$ for each run that varies across runs. We chose $\bhx$ to be the conditional optimum given the MAP of $\bP$ and sampled  $B=101$ probability simplices from $\pi(\bP|\bz)$. For SRSI and its variations, $n_0 = 100$ initial solution-distributions pairs are chosen and simulated $r=30$ times to compute the MLEs of the GP hyperparameters and we set $R_t = 30$ for all $t$.  For benchmarking, we found $\tilde{S}^B_\alpha(\delta)$ using the exact cost function in Appendix A. 

\begin{figure} [t]
\begin{subfigure} {.5\textwidth}
\begin{center}
\includegraphics[scale=0.57]{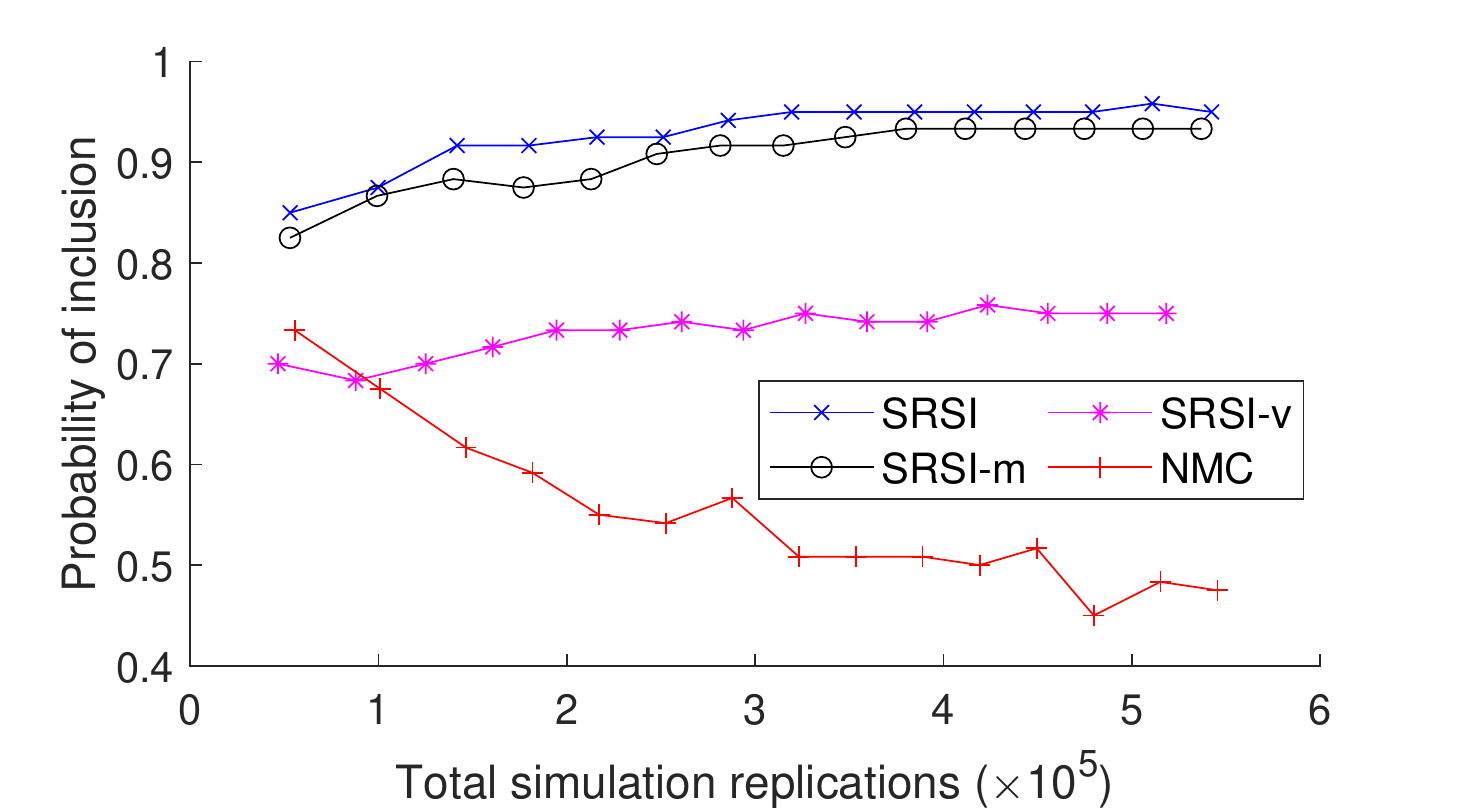}
\end{center}
\end{subfigure}%
\begin{subfigure} {.5\textwidth}
\begin{center}
\includegraphics[scale=0.57]{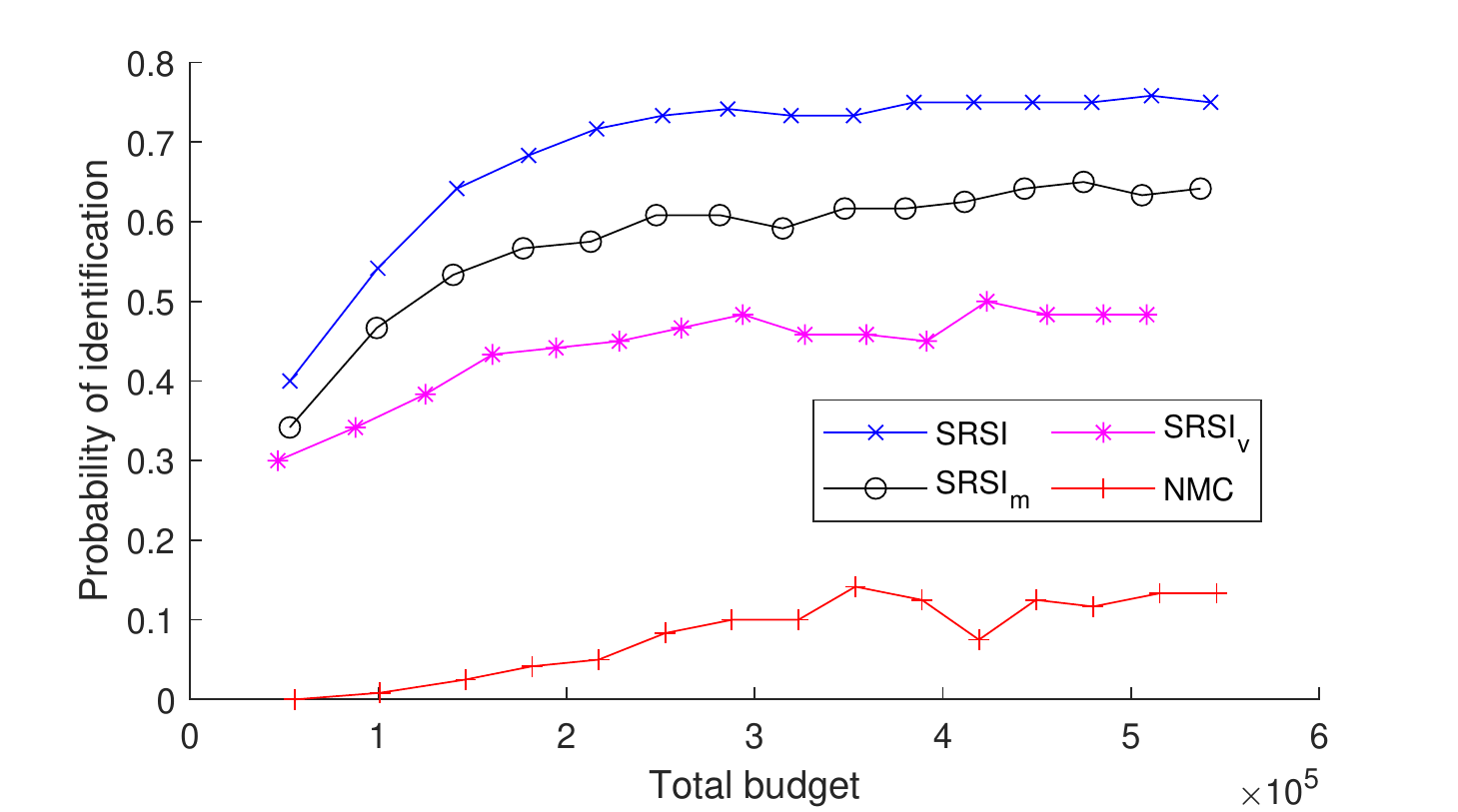}
\end{center}
\end{subfigure}
\begin{subfigure} {.5\textwidth}
\begin{center}
\includegraphics[scale=0.57]{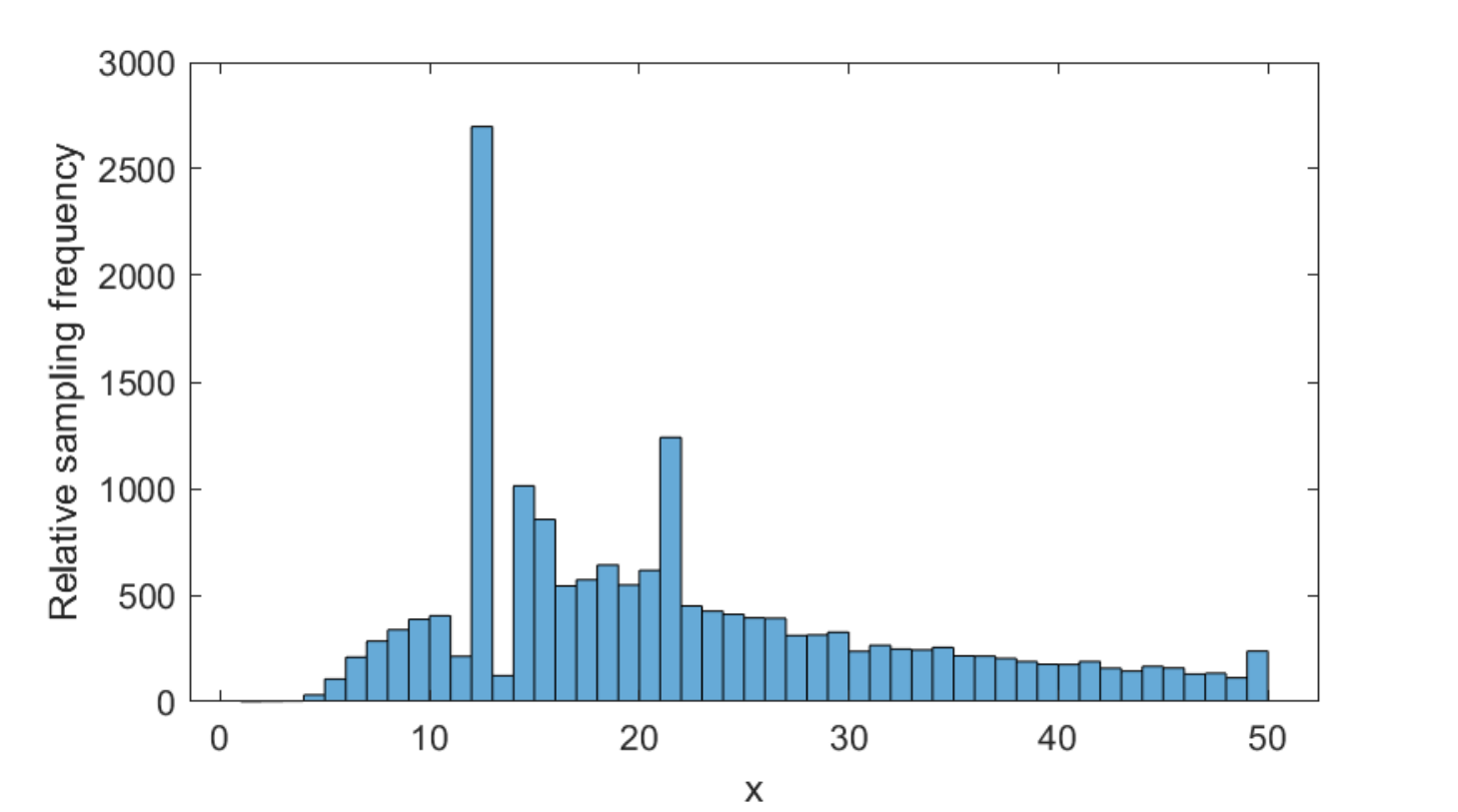}
\end{center}
\end{subfigure}
\begin{subfigure} {.5\textwidth}
\begin{center}
\includegraphics[scale=0.57]{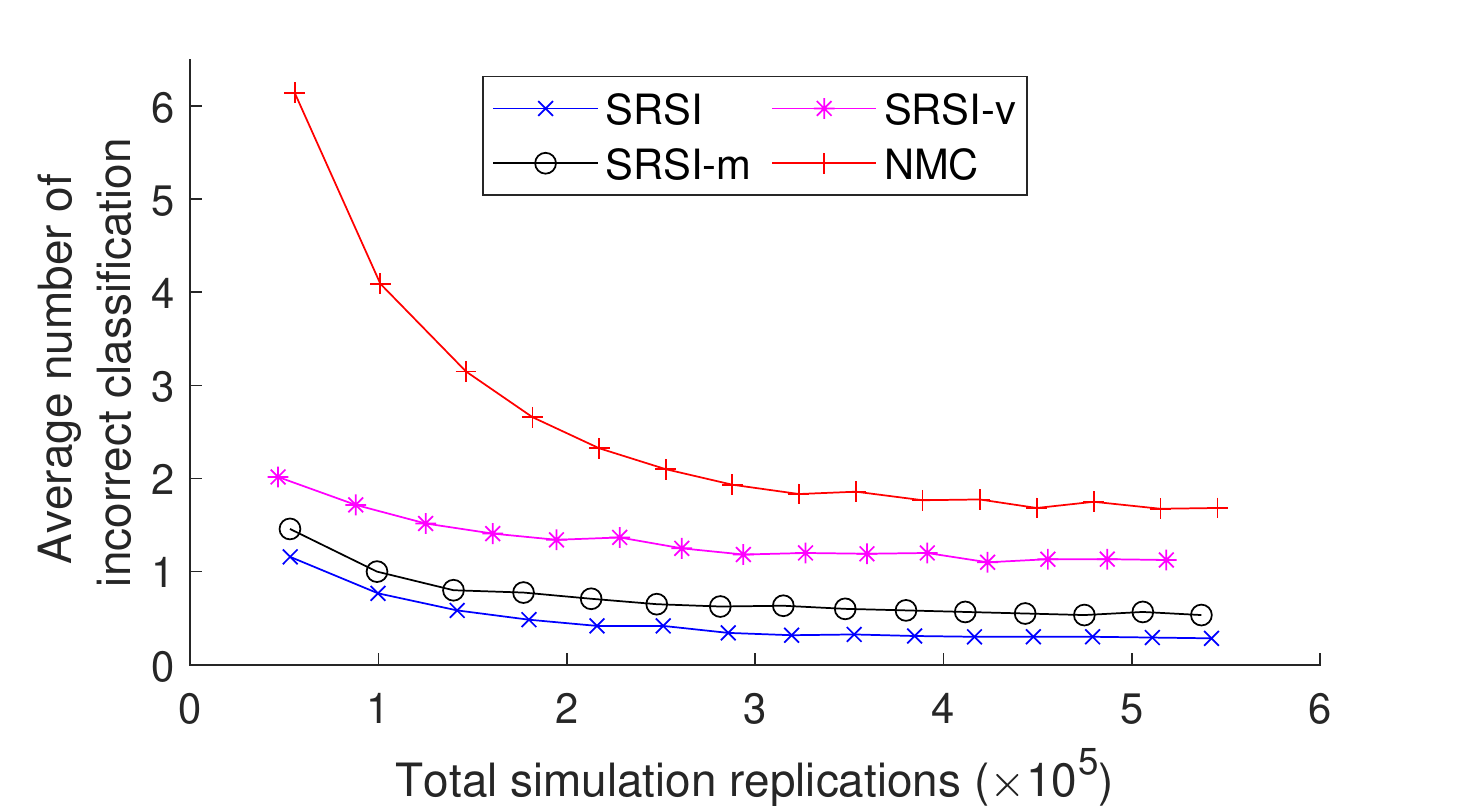}
\end{center}
\end{subfigure}
\caption{Empirical performances of the four procedures, SRSI, NMC, SRSI-m, and SRSI-v, from $120$ runs. Starting from the top left in clockwise order, each graph shows 1) the estimated probability that the estimated risk set from each procedure includes all elements of  $\tilde{S}^B_\alpha(\delta)$; 2) the estimated probability that the estimated risk set from each procedure equals to  $\tilde{S}^B_\alpha(\delta)$; 3) the average number of incorrectly classified solutions; 4) the sampling frequency of the solutions for a single run of SRSI, where $\bhx = 12$ and $\tilde{S}^B_\alpha(\delta) = \{15, 16, 17, 18, 19, 20\}$ for $\delta=1$ and $\alpha = 0.2$.}
\label{fig:SRSIcomparison}
\end{figure}

The results from all four procedures are shown in Figure~\ref{fig:SRSIcomparison}. The top left plot displays the probability that the estimated risk set from each procedure includes all elements of  $\tilde{S}^B_\alpha(\delta)$, i.e., probability of inclusion. SRSI has the highest probability of inclusion across all simulation budgets followed closely by SRSI-m, whereas SRSI-v whose distribution selection is only driven by GP prediction error performs worse than the other two. This shows importance of the distribution selection criterion for the performance of the sequential risk set inference procedure. Notice that the probability of inclusion has a downward trend for NMC. This is because for a small simulation budget, the standard error of $\bar{Y}(\bhx;\bP_b)$ is large and when it is overestimated, the NMC tends to include a lot of solutions in the risk set, which makes the estimated risk set much larger than $\tilde{S}^B_\alpha(\delta)$. 
The top right plot shows the probability that the estimated risk set from each procedure equals to  $\tilde{S}^B_\alpha(\delta)$, i.e., probability of identification. SRSI still dominates the other procedures. Notice there are bigger gaps among SRSI, SRSI-m, and SRSI-v than for the probability of inclusion.  NMC shows considerably poorer performance compared to the SRSI and its variations.
The bottom right plot shows the average number of incorrectly classified solutions. SRSI consistently has the smallest average across all simulation budgets;  SRSI has the average close to $1$ even for relatively simulation budget.  On the other hand, NMC has much larger average number of incorrect classifications than the rest especially when the simulation budget is smaller. 

To further demonstrate the performance of SRSI, we present the sampling frequency of all solutions within a single run of SRSI in the bottom left plot in Figure~\ref{fig:SRSIcomparison}. The $y$-axis represents the number of times each solution is sampled by the end of the $t=15{,}000$th iteration. For this run, $\bhx=12$ and $\tilde{S}^B_\alpha(\delta) = \{15, 16, 17, 18, 19, 20\}$. The most frequently sampled solution is $\bhx$, because all the rest of the solutions are compared with $\bhx$ and our sampling criterion ensures the GP has small prediction error at $\bhx$. Besides $\bhx,$ notice that more frequently sampled solutions are at the boundaries of $\tilde{S}^B_\alpha(\delta)$. These solutions are more difficult to classify than those at the interior of $\tilde{S}^B_\alpha(\delta)$ or clearly excluded solutions, thus our sampling criterion allocates more replications to them. 


\begin{table} [tb]
\caption{Comparison between $\widehat{S}_\alpha(\delta)$ and $\tilde{S}^B_\alpha(\delta)$ at different $\alpha$ levels from the same run as the bottom left plot of Figure~\ref{fig:SRSIcomparison}, where $\bhx = 12$. Distinct elements between two sets are in bold. }
\label{tab:alphaSets}
\resizebox{\textwidth}{!}{
\begin{tabular}{l|l|l}\hline
$\alpha$ & \multicolumn{1}{c|}{$\widehat{S}_\alpha(\delta)$} &  \multicolumn{1}{c}{$\tilde{S}^B_\alpha(\delta)$}\\ \hline
\multirow{3}{*}{$0.05$} & $\{6, 7, 8, 9, 10, 14, 15, 16, 17, 18, 19, 20, 21, 22,23,24,$ 
& $\{6, 7, 8, 9, 10, 14, 15, 16, 17, 18, 19, 20, 21, 22,23,24,$ \\
& $ 25, 26, 27, 28, 29, 30, 31, 32, 33, 34, 35, 36,37,38, 39,$ 
& $ 25, 26, 27, 28, 29, 30, 31, 32, 33, 34, 35, 36,37,38, 39, $ \\
& $ 40, 41, 42, 43, 44, 45, 46, 47, 48, \mathbf{49},\mathbf{50}\}$
& $ 40, 41, 42, 43, 44, 45, 46, 47, 48\}$\\ \hline
\multirow{2}{*}{$0.1$} & $\{7, 8, 9, 10, 14, 15, 16, 17, 18, 19, 20, 21, 22,23,24, 25,$
&$\{7, 8, 9, 10, 14, 15, 16, 17, 18, 19, 20, 21, 22,23,24,25, $\\
& $  26, 27, 28, 29, 30, 31, \mathbf{32}\}$
&$26, 27, 28, 29, 30, 31\}$\\\hline
$0.15$ & $\{9, 10, 15, 16, 17, 18, 19, 20, 21, 22, 23, 24, 25,\mathbf{26},\mathbf{27}\}$
& $\{\mathbf{8}, 9, 10, 15, 16, 17, 18, 19, 20, 21, 22, 23, 24, 25\}$\\\hline
$0.2$ & $\{15, 16, 17, 18, 19, 20\}$& $\{15, 16, 17, 18, 19, 20\}$\\\hline
$0.25$ & $\emptyset$& $\emptyset$\\
\end{tabular}}
\end{table}

\begin{table} [tb]
\caption{Comparison between $\widehat{S}_\alpha(\delta)$ and $\tilde{S}^B_\alpha(\delta)$ at different $\delta$ values from the same run as the bottom left plot of Figure~\ref{fig:SRSIcomparison}, where $\bhx = 12$. Distinct elements between two sets are in bold. }
\label{tab:deltaSets}
\resizebox{\textwidth}{!}{
\begin{tabular}{l|l|l}\hline
$\delta$ & \multicolumn{1}{c|}{$\widehat{S}_\alpha(\delta)$} &  \multicolumn{1}{c}{$\tilde{S}^B_\alpha(\delta)$}\\ \hline
$0$ & $\{8, 9, 10, 11, 13, 14, 15, 16, 17, 18, 19, 20, 21, 22, 23, \mathbf{24}\}$ 
& $\{8, 9, 10, 11, 13, 14, 15, 16, 17, 18, 19, 20, 21, 22, 23\}$  \\ \hline
$0.5$ & $\{8, 9,10, 11, 14, 15, 16, 17, 18, 19, 20, 21, \mathbf{22}\}$
&$\{8, 9,10, 11, 14, 15, 16, 17, 18, 19, 20, 21\}$\\ \hline
$1.0$ & $\{15, 16, 17, 18, 19, 20\}$& $\{15, 16, 17, 18, 19, 20\}$\\\hline
$1.5$ & $\emptyset$& $\emptyset$\\
\end{tabular}}
\end{table}

Although the experiments in Figure~\ref{fig:SRSIcomparison} were run with $\delta = 1$ and $\alpha = 0.2$, we may still estimate the risk sets at different $\alpha$ levels using the same GP posterior obtained from SRSI. In Table~\ref{tab:alphaSets}, we compare $\widehat{S}_\alpha(\delta)$ with the corresponding $\tilde{S}^B_\alpha(\delta)$ at different $\alpha$ levels for the same SRSI run displayed in the bottom left plot of Figure~\ref{fig:SRSIcomparison}, where distinct elements are in bold. For $\alpha \neq 0.2$, some solutions are misclassified, nevertheless, the estimated risk sets are quite close to $\tilde{S}^B_\alpha(\delta)$ at all levels of $\alpha$. 
The same observations can be made for the comparison between $\widehat{S}_\alpha(\delta)$ and $\tilde{S}^B_\alpha(\delta)$ at different $\delta$ values in Table~\ref{tab:deltaSets}.

\subsection{Ambulance dispatching center location problem}

In this section, we present a more realistic example modified from the DOvS case study in~\cite{wang2019}. Their goal is to decide a new location of an ambulance dispatching center in State College, Pennsylvania that minimizes the expected response time of the ambulances defined as  the time between receiving the emergency call and the patient's pick-up. 

In our simplified version, we consider a municipality consists of $|\cX| = 36$ neighborhoods on a grid in Figure~\ref{fig:freq.map}. Suppose the decision maker wants to select a neighborhood to place the dispatching center and is willing to ignore any difference in the expected response time within $\delta=1$ minute.  The center has $8$ ambulances in total. 
The color gradation of Figure~\ref{fig:freq.map} represents relative call frequency in each neighborhood calculated from $331$ emergency phone calls received by the center in the past; $40$ calls were collected from Neighborhood 30, whereas only one call was made from each of Neighborhoods 6, 11, and 15. The aggregate arrival process of emergency calls in the municipality is known as a Poisson process with rate $1$ call per hour, whereas the location distribution of the patients is unknown. When an emergency call is received, they dispatch the next available ambulance at the center to the patient's location and transport him/her to the center. If all ambulances are busy, the patient is put in the virtual queue until there is availability. All patients in the queue are first-come-first-served.  The ambulance's travel time between two neighborhoods is distributed as Erlang with scale $7.2$ minutes and phase equal to the Manhattan distance between the neighborhoods plus one. For instance, the travel time from Neighborhood 11 to 30 is distributed as Erlang($7.2, 5$). The travel time distribution is also assumed to be known. 

\begin{figure} [tb]
\begin{subfigure}{.5\textwidth}
\begin{center}
\includegraphics[scale=0.5]{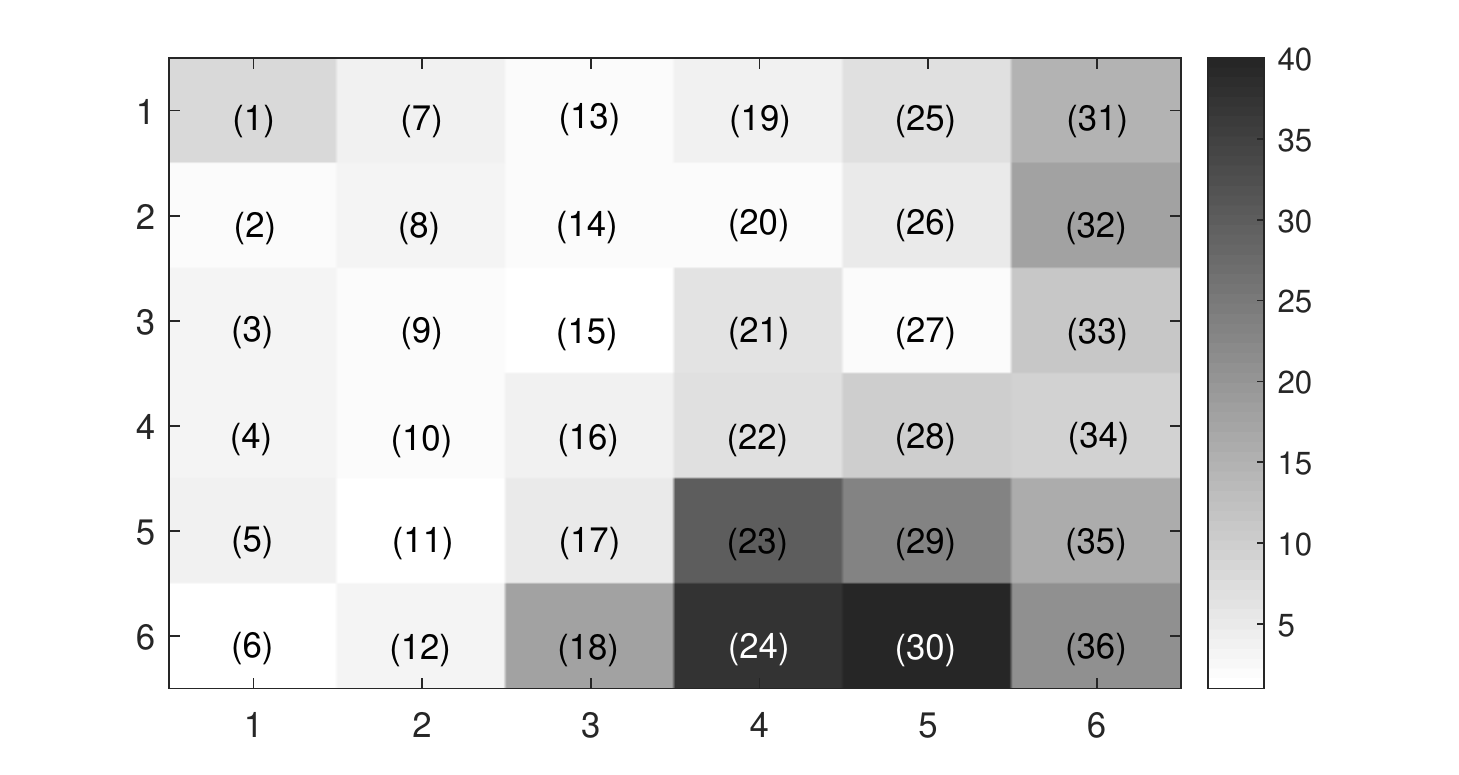}
\caption{The number of emergency phone calls received from $36$ neighborhoods}
\label{fig:freq.map}
\end{center}
\end{subfigure}%
\begin{subfigure}{.5\textwidth}
\begin{center}
\includegraphics[scale=0.5]{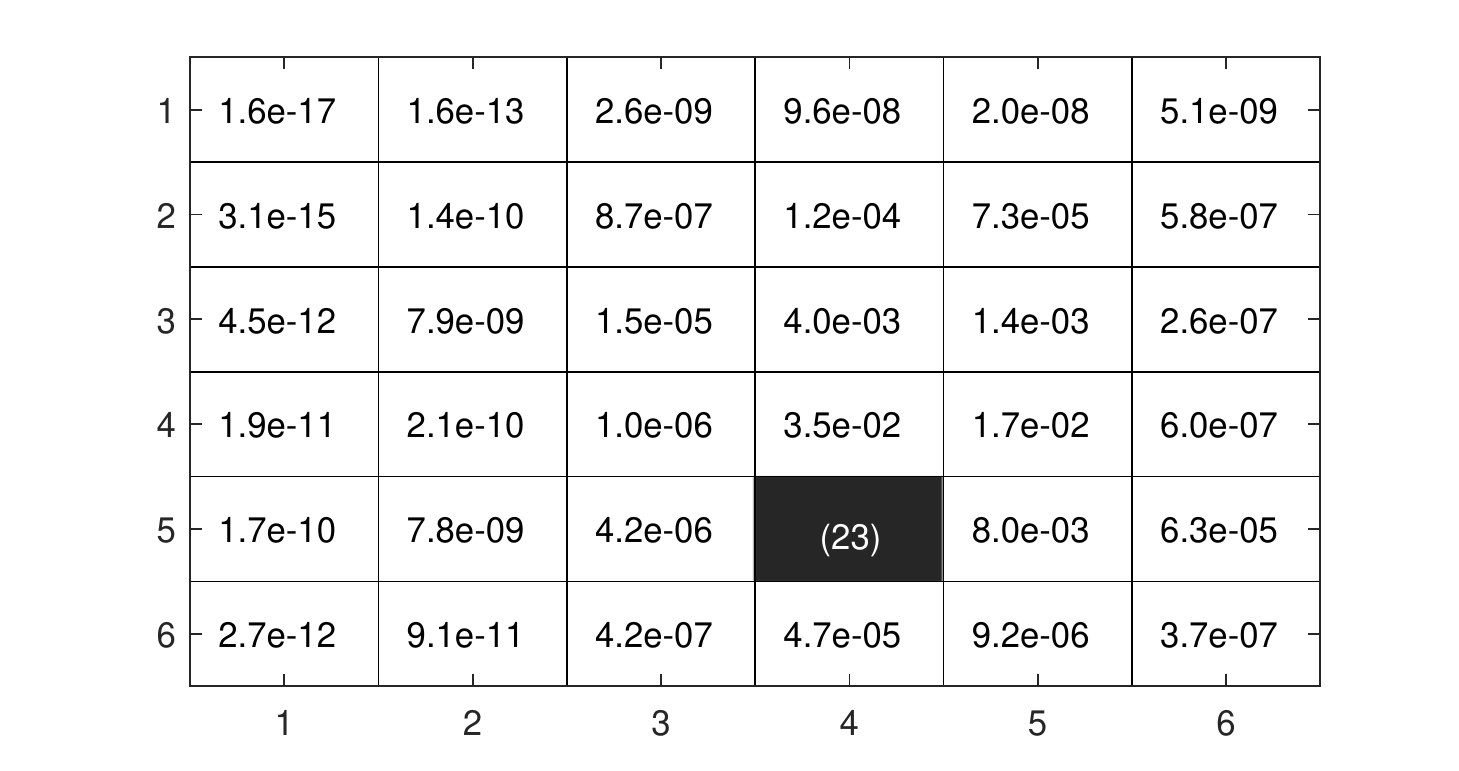}
\caption{The risk set of Neighborhood 23}
\label{fig:riskset23}
\end{center}
\end{subfigure}

\begin{subfigure}{.5\textwidth}
\begin{center}
\includegraphics[scale=0.5]{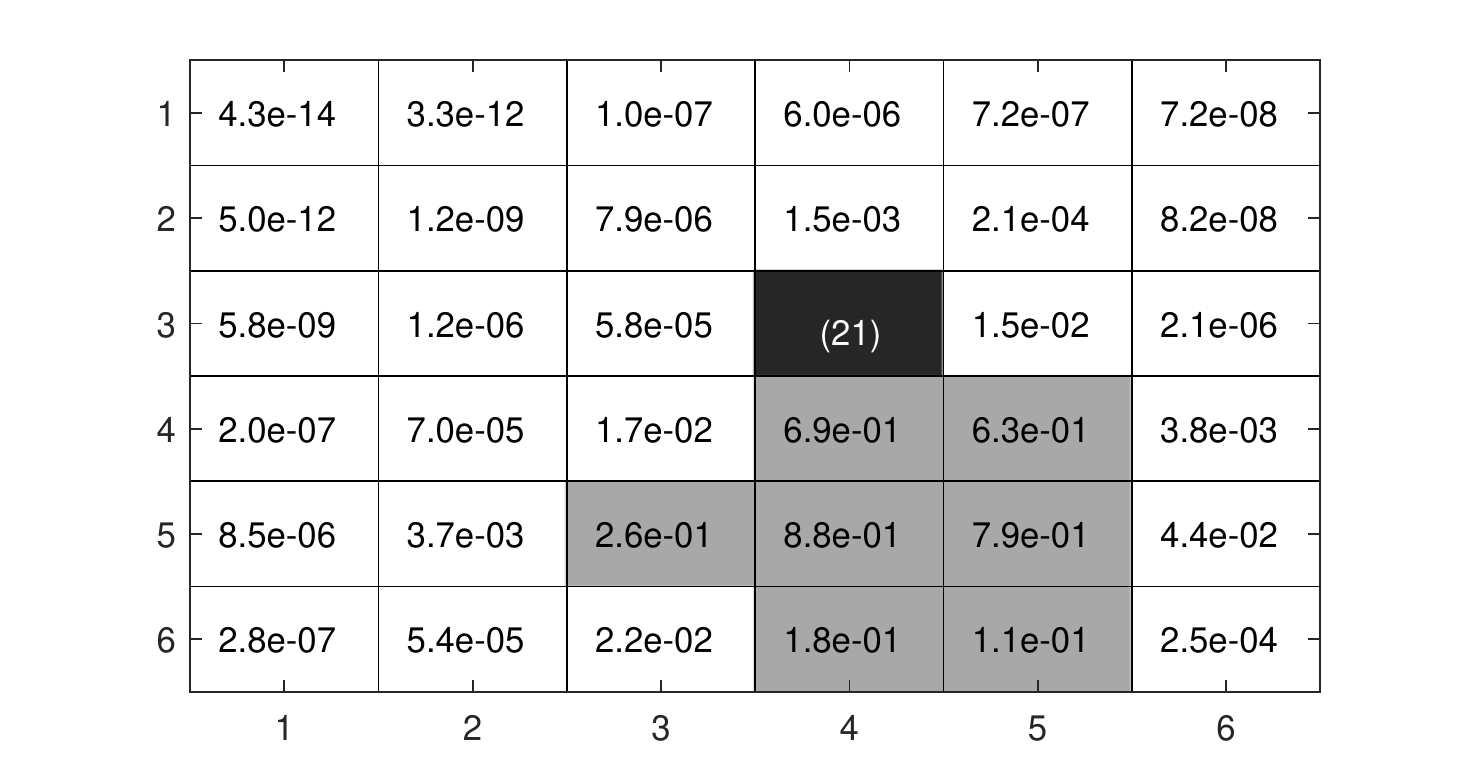}
\caption{The risk set of Neighborhood 21}
\label{fig:riskset21}
\end{center}
\end{subfigure}%
\begin{subfigure}{.5\textwidth}
\begin{center}
\includegraphics[scale=0.5]{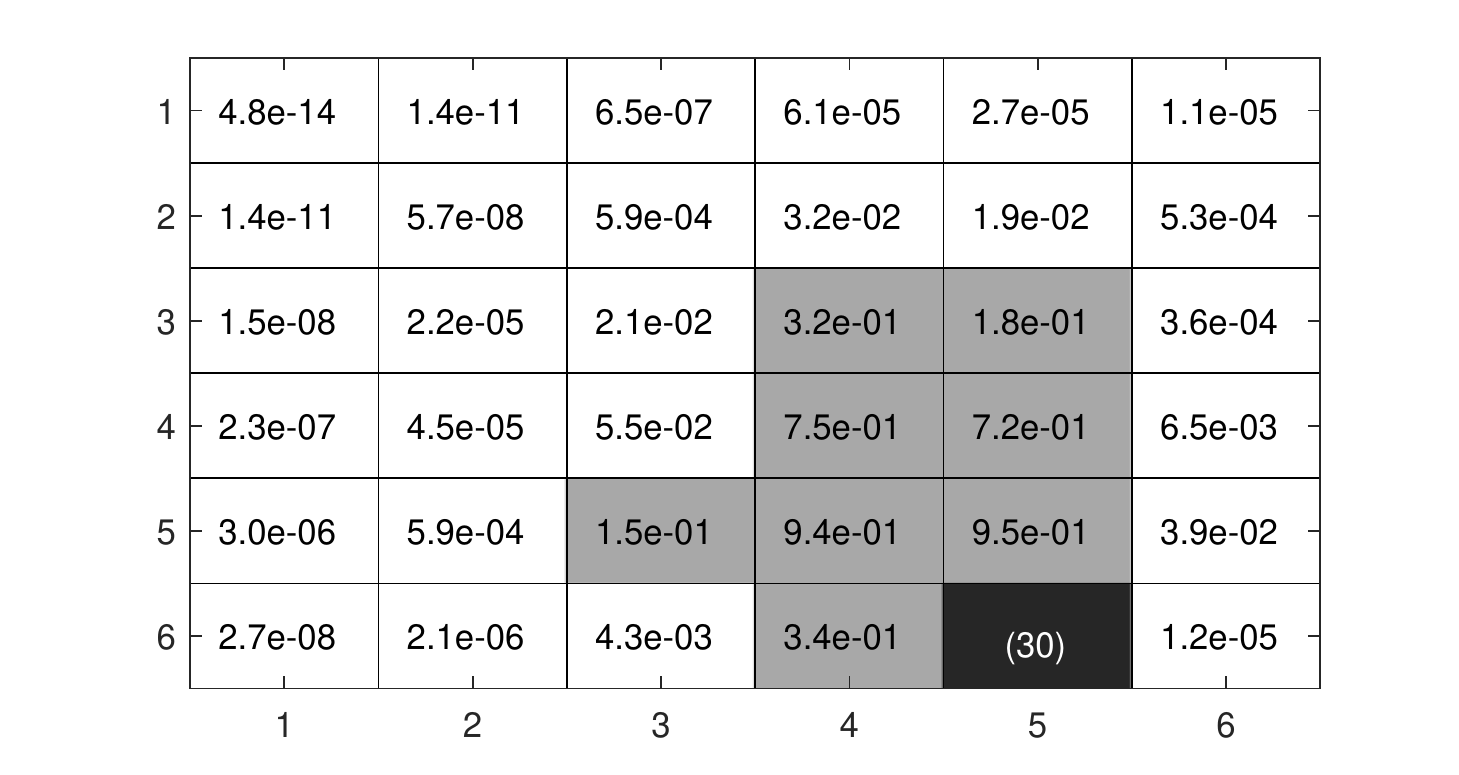}
\caption{The risk set of Neighborhood 30}
\label{fig:riskset30}
\end{center}
\end{subfigure}
\caption{Neighborhood maps of the ambulance dispatching center location problem. We set $\delta = 1$ minute and $\alpha = 0.1$. In~(b)--(d), $\bhx$ is in black and the solutions included in the estimated risk sets are in grey. The number in each region is the estimated probability that each solution is more than $1$-minute shorter than $\bhx$ given the posterior on the patient location distribution. }
\end{figure}

To summarize, the only source of input uncertainty in this problem is the location distribution of the patients, which we  model nonparametrically using the Dirichlet prior described  and we set $\kappa_j = 1$ for  $j=1,2,\ldots,36$. Thus, the MAP of the Dirichlet posterior distribution, $P_\mathrm{MAP}$, is the same as the empirical distribution of the call frequency data in Figure~\ref{fig:freq.map}. 

We built a discrete-event simulator using Python to estimate the steady-state mean of the response time for each solution. For each simulated emergency call, its location is randomly generated from the probability simplex that represents the location distribution. To eliminate the initial bias, $1{,}000$-hour warm-up period is used for all solutions. Each replication averages the response time of patients picked up by the ambulances during $50$ hours after the warm up. 

Suppose the decision maker ran a R\&S procedure to find Neighborhood 23 to be the optimum conditional on $P_\mathrm{MAP}$---we confirmed this by replicating all solutions $10{,}000$ times to estimate the conditional expected response time with less than $4\%$ relative errors---and wants to further investigate input model risk of implementing this decision. She chose $\alpha = 0.1$ (risk averse). 

We ran SRSI with $B = 150, n_0 = 108,$ and $r=2$. We continued to run $R_t=2$ replications for each selected solution-distribution pair and stopped at the $t=100$th iteration. Figure~\ref{fig:riskset23} shows the estimated probability that each solution's expected response time is more than $1$-minute shorter than that of Neighborhood 23 given the posterior on the patient location distribution. Notice that all solutions' probabilities are less than $0.1$ leaving the estimated risk set empty. This indicates Neighborhood 23 is a robust solution to input model risk. 

Along with the conditional optimum, we also ran the procedure to estimate the risk sets of Neighborhoods 21 (geographic center) and 30 (population center) as presented in Figures~\ref{fig:riskset21} and \ref{fig:riskset30}, respectively. In the former, notice that the solutions in the risk set are closer to the high-frequency neighborhoods in Figure~\ref{fig:freq.map} than Neighborhood 21 is. These locations tend to perform better than Neighborhood 21 as they can serve a large portion of the emergency patients with shorter response time. On the other hand, Neighborhood 36 is excluded from the risk set despite its proximity to the high-frequency neighborhoods. This is because of its distance from the upper left quadrant of the municipality; although they receive emergency calls less frequently from the quadrant, placing the dispatching center at Neighborhood~36 substantially increases the expected response time of the patients in the quadrant. Given a probability simplex sampled from the posterior Dirichlet that has relatively high probabilities on the upper left quadrant, Neighborhood~36 performs significantly  worse than Neighborhood 21 causing it to be excluded from the risk set. 
Similar observations can be made for the risk set of Neighborhood 30 in Figure~\ref{fig:riskset30}.


\section*{Acknowledgment}
This work is supported by the NSF Grant No.\ 1854659.

\onehalfspacing
\bibliographystyle{chicago}
\bibliography{SequentialRiskSet}

\doublespacing

\end{document}